\algrenewcommand\algorithmicensure{\textbf{Parties:}} 
\algrenewcommand\algorithmicrequire{\textbf{Input:}}
\algrenewcommand\algorithmicindent{2em}%
\spnewtheorem*{lemma*}{Lemma}{\bfseries}{\itshape}
\spnewtheorem*{theorem*}{Theorem}{\bfseries}{\itshape}
\newcommand{\cC}{\mathcal{C}}
\newcommand{\cF}{\mathcal{F}}
\DeclareMathOperator{\algh}{\algo{H}}
\DeclareMathOperator{\algp}{\algo{P}}
\DeclareMathOperator{\binary}{bin}
\DeclareTextCommand{\textdollar}{T1}{\scalebox{.85}{\symbol{`\$}}}
\DeclareTextCommand{\mathdollar}{T1}{\scalebox{.60}{\symbol{`\$}}}
\newcommand{\AlgReturn}[1]{ \textbf{return} #1}
\newcommand{\algo}[1]{ \mathsf{#1}}
\newcommand{\event}{\mathcal{E}}
\DeclareMathOperator{\query}{\mathsf{query}}
\DeclareMathOperator{\WOR}{*}
\DeclareMathOperator{\PKSteg}{\mathsf{PKStS}}
\DeclareMathOperator{\Steg}{\mathsf{PKStS}}
\DeclareMathOperator{\PKES}{\mathsf{PKES}}
\DeclareMathOperator{\Enc}{\mathsf{Enc}}
\DeclareMathOperator{\Dec}{\mathsf{Dec}}
\DeclareMathOperator{\Ward}{\mathsf{W}}
\DeclareMathOperator{\Find}{\mathsf{Find}}
\DeclareMathOperator{\Guess}{\mathsf{Guess}}
\DeclareMathOperator{\Gen}{\mathsf{Gen}}
\DeclareMathOperator{\sscca}{ss-cca}
\DeclareMathOperator{\ml}{\mathsf{ml}}
\DeclareMathOperator{\setup}{\mathsf{Setup}}
\DeclareMathOperator{\cl}{\mathsf{cl}}
\DeclareMathOperator{\outl}{\mathsf{ol}}
\DeclareMathOperator{\fin}{\mathsf{in}}
\DeclareMathOperator{\fout}{\mathsf{out}}
\DeclareMathOperator{\dl}{\mathsf{dl}}
\DeclareMathOperator{\Eval}{\mathsf{Eval}}
\DeclareMathOperator{\Fi}{\mathsf{Fi}}
\DeclareMathOperator{\Att}{\mathsf{A}}
\DeclareMathOperator{\SSCCA-Dist}{\mathsf{SS-CCA-Dist}}
\newcommand{\pk}{\textit{pk}}
\newcommand{\sk}{\textit{sk}}
\newcommand{\combine}[2]{#1\!.\!#2}
\newcommand{\sgets}{\twoheadleftarrow}
\newcommand{\minent}{H_{\infty}}
\DeclareMathOperator{\Dist}{\mathsf{Dist}}
\DeclareMathOperator{\adv}{\mathbf{Adv}}
\DeclareMathOperator{\Adv}{\mathbf{Adv}}
\DeclareMathOperator{\cca}{cca}
\DeclareMathOperator{\ccad}{cca\text{\$}}
\DeclareMathOperator{\hash}{hash}
\DeclareMathOperator{\prp}{prp}
\DeclareMathOperator{\pkess}{pkes}
\DeclareMathOperator{\supp}{supp}
\DeclareMathOperator{\Ins}{InS}
\DeclareMathOperator{\Sgen}{Sgen}
\DeclareMathOperator{\Nq}{Nq}
\DeclareMathOperator{\Nsui}{Nsui}
\DeclareMathOperator{\unrel}{\mathbf{UnRel}}
\DeclareMathOperator{\rejsam}{\mathsf{rejsam}}
 \DeclareMathOperator{\hist}{\mathsf{hist}}
 \DeclareMathOperator{\lex}{lex}
\newcommand{\ie}{i.\,e.\xspace}
\newcommand{\eg}{e.\,g.\xspace}
\DeclareMathOperator{\negl}{\mathsf{negl}}
\DeclareMathOperator{\chan}{\mathcal{C}}
\DeclareMathOperator{\Perm}{\mathsf{Perms}}
\newacro{CRHF}{collision-resistant hash function}
        \newacro{PPTM}{polynomial probabilistic Turing machine}
        \newacro{PRP}{pseudorandom permutation}
                \newacro{PRF}{pseudorandom function}
        \newacro{PKES}{public key encryption scheme}
        \newacro{CCA}{chosen-ciphertext attack}
        \newacro{RCCA+}[RCCA\$]{replayable chosen-covertext\$
          attack}
        \newacro{CCA+}[CCA\$]{chosen-ciphertext\$ attack}
        \newacro{SS-CCA}{steganographic chosen-covertext attack}
        \newacro{SS-RCCA}{steganographic replayable chosen-covertext
          attack}
                \newacro{SES}{symmetric encryption scheme}
\newcommand{\myAlgorithm}[4]{
  \begin{algorithm}[H]
  \begin{tcolorbox}[title={#4: #1}, colframe=black!10, coltitle=black]
    \begin{algorithmic}[1]
\vspace*{-1mm}
\Require #2
      #3
    \end{algorithmic}
  \end{tcolorbox}
\end{algorithm}
\vspace{-.8cm}
}
\title{On the Gold Standard for Security of\\ Universal Steganography}
\author{Sebastian Berndt\inst{1} \and Maciej Li\'{s}kiewicz\inst{2}}
\institute{Department of Computer Science, Kiel University\\
  \email{seb@informatik.uni-kiel.de}\and
  Institute for Theoretical Computer Science, University of L\"{u}beck\\
 \email{liskiewi@tcs.uni-luebeck.de}}
 \authorrunning{S. Berndt and M. Li\'{s}kiewicz}
\begin{document}

\maketitle\thispagestyle{headings}

\begin{abstract}
  While symmetric-key steganography is quite well understood both in the
  information-theoretic and in the computational setting, many fundamental
  questions about its public-key counterpart resist persistent attempts to solve
  them. The computational model for public-key steganography was proposed by von
  Ahn and Hopper in EUROCRYPT~2004. At TCC~2005, Backes and Cachin gave the
  first universal public-key stegosystem~--~\ie one that works on all
  channels~--~achieving security against replayable chosen-covertext attacks
  (\acs{SS-RCCA}) and asked whether security against non-replayable
  chosen-covertext attacks (\acs{SS-CCA}) is achievable. Later, Hopper
  (ICALP~2005) provided such a stegosystem for every efficiently sampleable
  channel, but did not achieve universality. He posed the question whether
  universality \emph{and} \acs{SS-CCA}-security can be achieved simultaneously.
  No progress on this question has been achieved since more than a decade. In
  our work we solve Hopper's problem in a somehow complete manner: As our main
  positive result we design an \acs{SS-CCA}-secure stegosystem that works for
  \emph{every} memoryless channel. On the other hand, we prove that this result
  is the best possible in the context of universal steganography. We provide a
  family of \emph{$0$-memoryless} channels -- where the already sent documents
  have only marginal influence on the current distribution -- and prove that no
  \acs{SS-CCA}-secure steganography for this family exists in the standard
  non-look-ahead model.
\end{abstract}



\section{Introduction}
Steganography is the art of hiding the transmission of information
to achieve secret communication without revealing its presence. 
In the basic setting, the aim of the steganographic encoder 
(often called Alice or the stegoencoder) is to hide a secret 
message in a document and to send it to the stegodecoder (Bob)
via a public \emph{channel} which is completely monitored 
by an \emph{adversary} (Warden or steganalyst). The channel is
modeled as a probability distribution of legal documents, 
called \emph{covertexts}, and the adversary’s task is to 
distinguish those from altered ones, called \emph{stegotexts}.
Although strongly connected with cryptographic encryption, 
steganography is not encryption: While encryption only tries to hide 
the content of the transmitted message, steganography aims 
to hide both the message and the fact that a message was 
transmitted at all. 

As in the cryptographic setting, the security of the stegosystems
should only rely on the secrecy of the keys used by the system. 
\emph{Symmetric-key} steganography, which assumes that Alice 
and Bob share a secret-key, has been a subject of intensive study 
both in an information-theoretic \cite{Cachin2004,ryabko2011constructing,wang2008perfectly} 
and in a computational setting 
\cite{dedic2009upper,hopper2002provably,hopper2009provably,katzenbeisser2002defining,kiayias2014stego,liskiewicz2013grey}.
A drawback of such an approach is that 
the encoder and the decoder must have shared a key 
in a secure way. This may be unhandy, 
\eg if the encoder communicates with several 
parties. 

In order to avoid this problem in cryptography, Diffie and
Hellman provided the notion of a \emph{public-key scenario} in their groundbreaking work \cite{diffie1976publickey}.
This idea has proved to be
very useful and is currently used in nearly every cryptographic
application. Over  time, the notion of security against so-called
\emph{chosen ciphertext attacks} (\ac{CCA}-security) has established itself
as the ``gold standard'' for security in the public-key scenario
\cite{hofheinz2016standard,kiltz2010adaptive}. In
this setting, an attacker has also access to a decoding oracle that
decodes every ciphertext different from the challenge-text.  Dolev,
Dwork and Naor \cite{dolev2000cca} proved that the simplest assumption
for public-key cryptography -- the existence of trapdoor permutations -- is
sufficient to construct a \ac{CCA}-secure public key cryptosystem.

Somewhat in contrast  to the research in cryptographic encryption, only very 
little studies in steganography have been concerned so far within the
public-key setting. Von Ahn and Hopper \cite{vonAhn2003public,vonAhn2004public}
were the first to give a formal framework 
and to prove that secure public-key steganography exists.  
They formalized security against a \emph{passive} adversary 
in which Warden is allowed to provide challenge-hiddentexts to Alice
in hopes of distinguishing covertexts from stegotexts 
encoding the hiddentext of his choice. For a restricted model,
they also defined  security against an active adversary; It is 
assumed, however, that Bob must know the identity of Alice, 
which deviates from the common bare public-key scenario.

Importantly, the schemes provided in \cite{vonAhn2003public,vonAhn2004public} 
are \emph{universal} (called also \emph{black-box} in the literature).
This property guarantees that the systems are secure with 
respect not only to a concrete channel 
$\cC$ 
but to a broad range of channels. 
The importance of universality is based on the fact that
typically no good description of the distribution
of a channel is known.

In \cite{backes2005active}, Backes and Cachin provided a notion of
security for public-key stegano\-graphy with \emph{active} attacks,
called \emph{\acp{SS-CCA}}.  In this scenario the warden may provide a
challenge-hiddentext to Alice and enforce the stegoencoder to send
stegotexts encoding the hiddentext of his choice.  The warden may then
insert documents into the channel between Alice and Bob and observe
Bob's responses in hope of detecting the steganographic communication.
This is the steganographic equivalent of a chosen ciphertext attack
against encryption and it seems to be the most general type of security
for public-key steganography with active attacks similar to
\ac{CCA}-security in encryption.  Backes and Cachin also gave a
universal public-key stego\-system which, although not secure in the
general \ac{SS-CCA}-setting, satisfies a relaxed notion called
\emph{steganographic security against publicly-detectable replayable
  adaptive chosen-covertext attacks} (\ac{SS-RCCA}) inspired by the work
of Canetti et al.~\cite{canetti2003relaxing}. In this relaxed setting, 
the warden may still provide a hiddentext to Alice and is allowed to
insert documents into the channel between Alice and
Bob 
but with the restriction that the warden's document does not encode the
chosen hiddentext. Backes and Cachin left as an open problem if secure
public-key steganography exists at all in the \ac{SS-CCA}-framework.

This question 
was answered by Hopper \cite{hopper2005covertext} in the affirmative in case
Alice and Bob communicate via an efficiently sampleable channel $\chan$. He
proved (under the assumption of a \ac{CCA}-secure cryptosystem) that for every
such channel $\chan$ there is an \ac{SS-CCA}-secure stegosystem $\Steg_{\chan}$
on $\chan$. The system cleverly ``derandomizes''~sam\-pling documents by using
the sampling-algorithm of the channel and using a pseudo\-random generator to
deterministically embed the encrypted message. Hence, $\Steg_{\chan}$ is only secure
on the single channel $\chan$ and is thus not universal. Hopper
\cite{hopper2005covertext} posed as a challenging open problem to show the
(non)existence of a universal \ac{SS-CCA}-secure stegosystem.
Since more than a decade,
public key steganography has been used as a tool in different contexts
(\eg broadcast steganography \cite{fazio2014broadcast} and private
computation \cite{chandran2007covert,cho2016covert}), but 
this fundamental question remained open. 

We solve Hopper's problem in a complete manner by proving (under the
assumption of the existence of doubly-enhanced trapdoor permutations and
collision-resistant hash functions) the existence of an
\ac{SS-CCA}-secure public key stegosystem that works for every
\emph{memoryless} channel, \ie such that the documents are
independently distributed (for a formal definition see next section).
On the other hand, we also prove that the influence of the history --
the already sent documents -- dramatically limits the security of
stegosystems in the realistic non-look-ahead model: We show that no stegosystem
can be \ac{SS-CCA}-secure against all $0$-memoryless channels in the
non-look-ahead model. In these channels, the influence of the history is
minimal. We thereby
demonstrate a clear dichotomy result for universal public-key 
steganography: While memoryless
channels do exhibit an \ac{SS-CCA}-secure stegosystem, the introduction
of the history prevents this kind of security.

\subsubsection*{Our Contribution.}
As noted above, the stegosystem of Backes and Cachin has the drawback
that it achieves a weaker security than \ac{SS-CCA}-security while it
works on every channel \cite{backes2005active}. On the other hand, the
stegosystem of Hopper achieves \ac{SS-CCA}-security but is specialized
to a single channel~\cite{hopper2005covertext}.  We prove (under the
assumption of the existence of doubly-enhanced trapdoor permutations and
collision-resistant hash functions) that there is a stegosystem that is
\ac{SS-CCA}-secure on a large class of channels (namely the memoryless
ones). The main technical novelty is a method to generate covertexts for
the message $m$ such that finding a second sequence of covertexts that
encodes $m$ is hard. Hopper achieves this at the cost of the
universality of his system, while we still allow a very large class of
channels. We thereby answer the question of Hopper in the affirmative,
in case of memoryless channels.  Note that before this work, it was not
even known whether an \ac{SS-CCA}-secure stegosystem exists that works
for some class of channels (Hopper's system only works on a single
channel that is hard-wired into the system). Furthermore, we prove that
\ac{SS-CCA}-security for
memoryless channels is the best possible in a very natural model: If the
history influences the channel distribution in a minor way, \ie only by
its length, we prove that \ac{SS-CCA}-security is not achievable in the
standard non-look-ahead model of von Ahn and Hopper. In
Table~\ref{table:one}, we compare our results with previous works.

\renewcommand{\arraystretch}{1.5}
\begin{table}[ht] 
 \centering
  \small
  \begin{threeparttable}[b]
  \caption{Comparison of the public-key stegosystems}
  \begin{tabular}{lccc}
    Paper & Security & Channels& Applicability\\
    \hline    \hline
    von Ahn and Hopper \cite{vonAhn2003public} \hspace*{6mm} & passive & universal & possible\\
    \hline    Backes and Cachin \cite{backes2005active} &
                \ac{SS-RCCA} & universal& possible\\
    \hline    Hopper \cite{hopper2005covertext} & \ac{SS-CCA} & single constr. channel& possible\\
    \hline    This work (Theorem \ref{thm:sec}) & \ac{SS-CCA} & \quad all memoryless channels\quad& possible\\ 
    \hline    This work (Theorem \ref{thm:impossibility})& \ac{SS-CCA} &
                                                                         universal
                               & \quad impossible\tnote{*} \\ \hline
  \end{tabular}
           \begin{tablenotes}
         \item [*] In the non-look-ahead model against non-uniform wardens.
         \end{tablenotes}
\label{table:one}
\end{threeparttable}
\end{table} 

\subsubsection*{Related Results.}
Anderson and Petitcolas \cite{anderson1998limits} and Craver \cite{craver1998public}, 
have both, even before the publication of the work by von Ahn and Hopper \cite{vonAhn2003public,vonAhn2004public},
described ideas for public-key steganography, however,  with only 
heuristic arguments for security.
Van Le and Kurosawa \cite{vanLe2006public} showed that every
efficiently sampleable channel has an \ac{SS-CCA}-secure public-key
stegosystem. A description of the channel is built into the
stegosystem and it makes use of a pseudo-random generator $\algo{G}$ that 
encoder and decoder share. But the authors make a strong 
assumption concerning changes of internal states of $\algo{G}$ 
each time the embedding operation is performed, which does not fit into
the usual models of cryptography and steganography.
Lysyanskaya and Meyerovich \cite{lysyanskaya2006imperfect} investigated
the influence of the sampling oracle on the security of public key
stegosystems with passive attackers. They prove that the stegosystem of
von Ahn and Hopper \cite{vonAhn2004public} becomes insecure if the
approximation of the channel distribution by the sampling oracle 
deviates only slightly from the correct distribution. They also construct a
channel, where no incorrect
approximation of the channel yields a secure stegosystem. This
strengthens the need for universal stegosystems, as even tiny
approximation errors of the channel distribution may lead to huge
changes with regard to the security of the system.
Fazio, Nicolosi and Perera \cite{fazio2014broadcast} extended public-key
steganography to the multi-recipient setting, where a single sender communicates
with a dynamically set of receivers. Their system is designed such
that no outside party and no unauthorized user is able to detect the
presence of these broadcast communication.
Cho, Dachma-Soled and Jarecki \cite{cho2016covert} upgraded the covert
multi-party computation model of Chandran et
al.~\cite{chandran2007covert} to the concurrent case and gave 
protocols for several fundamental operations, \eg
string equality and set intersection. Their steganographic (or
\emph{covert}) protocols are based upon the decisional
Diffie-Hellman  problem.

\vspace*{2mm}\noindent
The paper is organized as follows.
Section~\ref{sec:model-steganography} contains the basic definitions
 and notations.
In Section
\ref{sec:detect}, we give an example attack on the stegosystem of Backes
and Cachin to highlight the differences  between
\ac{SS-RCCA}-security and \ac{SS-CCA}-security. The following
Section~\ref{sec:an-high-level} contains a high-level view of our
construction. 
Section~\ref{sec:bias} uses the results of~\cite{hopper2005covertext}
to prove that one can construct cryptosystems with
ciphertexts that are indistinguishable from a distribution on bitstrings
related to the hypergeometric distribution, which we will need later
on. The main core of our protocol is an algorithm to order the documents in an
undetectable way that still allows us to transfer information. This
ordering is described in Section~\ref{sec:ordering}. 
Our results concerning the existence of \ac{SS-CCA}-secure
steganography for every memoryless channel are then presented and proved
in Section~\ref{sec:stegosystem}. Finally,
Section~\ref{sec:impossibility} contains the impossibility result for
\ac{SS-CCA}-secure stegosystems in the non-look-ahead model on
$0$-memoryless channels.

In order to improve the presentation, 
we moved proofs of some technical statements 
to the appendix.

\section{Definitions and Notation}
\label{sec:model-steganography}
If $S$ is a finite set, we write
$x\sgets S$ to denote the \emph{random} assignment of a uniformly
chosen element of $S$ to $x$. If $A$ is a probability distribution or a randomized
algorithm, we write $x\gets A$ to denote the assignment of the
output of $A$, taken over the internal
coin-flips of~$A$. 

As our cryptographic and steganographic primitives will be parameterized
by the key length $\kappa$, we want that the ability of any polynomial
algorithm to attack this primitives is lower than the inverse of all
polynomials in $\kappa$. This is modeled by the definition of a
negligible function.  A function $\negl\colon \mathbb{N}\to [0,1]$ is
called \emph{negligible}, if for every polynomial $p$, there is an
$N_{0}\in \mathbb{N}$ such that $\negl(N) < p(N)^{-1}$ for every
$N\geq N_{0}$.
For a probability distribution $D$ on support $X$, the
\emph{min-entropy} $\minent(D)$ is defined as $\inf_{x\in X}\{-\log
D(x)\}$. 

We also need the notion of a \emph{strongly 2-universal hash function},
which is a set of functions $G$ mapping bitstrings of length $\ell$ to
bitstrings of length $\ell' < \ell$ such that for all
$x,x'\in \{0,1\}^{\ell}$ with $x\neq x'$ and all (not necessarily
different) $y,y'\in \{0,1\}^{\ell'}$, we have
$|\{f\in G\mid f(x)=y \land f(x')=y'\}|=\frac{|G|}{2^{2\ell'}}$. If
$\ell/\ell'\in \mathbb{N}$, a
typical example of such a family is the set of functions 
\begin{center}
  $\{x\mapsto
  \left(\sum_{i=1}^{\ell/\ell'}a_{i}x_{i}+b\right)\bmod 2^{\ell'}
  \mid a_{1},\ldots,a_{\ell/\ell'},b\in \{0,\ldots,2^{\ell'}-1\}\ \},$
\end{center}
where
$x_{i}$ denotes the $i$-th block of length $\ell'$ of $x$ and we
implicitly use the canonical bijection between $\{0,1\}^{n}$ and the finite field
$\{0,\ldots,2^{n}-1\}$. 
See \eg the
textbook of Mitzenmacher and Upfal
\cite{mitzenmacher2005probability} for more information on this.
For two polynomials $\ell$ and $\ell'$, a \emph{strongly 2-universal
  hash family} is a family $\mathcal{G}=\{G_{\kappa}\}_{\kappa\in \mathbb{N}}$ such
that every $G_{\kappa}$ is a strongly 2-universal hash function mapping
strings of length $\ell(\kappa)$ to strings of length $\ell'(\kappa)$. 

\vspace*{-1mm}
\subsubsection*{Channels and  Stegosystems.}
\label{sec:steg-steg}
In order to be able to embed messages into unsuspicious communication,
we first need to provide a definition for this. 
We model the
communication as an unidirectional transfer of \emph{documents} that we
will treat as strings of length~$n$ over a constant-size alphabet~$\Sigma$. 
The communication is defined via the concept of a \emph{channel} $\chan$ on
$\Sigma$: A function, that maps, for
every $n\in \mathbb{N}$,  a \emph{history} $\hist\in (\Sigma^{n})^{*}$ to a
probability distribution on $\Sigma^{n}$. We denote this probability
distribution by $\mathcal{C}_{\hist,n}$ and its \emph{min-entropy}
$\minent(\chan,n)$ as $\min_{\hist}\{\minent(\chan_{\hist,n})\}$. 
\begin{definition}
  We say that a channel $\cC$ is
  \emph{memoryless}, if 
  $\cC_{\hist,n}=\cC_{\hist',n}$ for all 
  $\hist,\hist'$, \ie if the history has no effect on the channel distribution.
\end{definition}
Note  the difference between memoryless and 
\emph{$0$-memoryless} channels of Lysyanskaya and Meyerovich
\cite{lysyanskaya2006imperfect}, where only the \emph{length} of the history
has an influence on the channel, since 
the channel distributions are
described by the use of  \emph{memoryless Markov chains}:
\begin{definition}[\cite{lysyanskaya2006imperfect}]
    A channel $\cC$ is \emph{$0$-memoryless}, if
    $\cC_{\hist,n}=\cC_{\hist',n}$ for all $\hist,\hist'$ such that
    $|\hist|=|\hist'|$.
  \end{definition}



 A stegosystem $\Steg$ tries to embed messages of length $\combine{\Steg}{\ml}$
 into $\combine{\Steg}{\outl}$  documents 
 of the channel $\chan$ that each have size $\combine{\Steg}{\dl}$, such that this
 sequence is indistinguishable from a
 sequence of typical documents.
  A \emph{public-key stegosystem}
$\Steg$ with
  message length $\combine{\Steg}{\ml}\colon \mathbb{N}\to \mathbb{N}$,
 document length
  $\combine{\Steg}{\dl}\colon \mathbb{N}\to \mathbb{N}$, and output length $\combine{\Steg}{\outl}\colon
  \mathbb{N}\to \mathbb{N}$ (all functions of the security parameter $\kappa$) is a triple of
  \acp{PPTM}  $[\combine{\Steg}{\Gen},\combine{\Steg}{\Enc},\combine{\Steg}{\Dec}]$\footnote{We
    will drop the prefix $\Steg$ if the context is clear.}  with the 
  functionalities:
  \begin{itemize}
  \item The \emph{key generation} 
    $\Gen$ on input $1^{\kappa}$ produces a pair
    $(\pk,\sk)$ consisting of a \emph{public
      key} $\pk$ and a  \emph{secret key} $\sk$ 
    (we assume
    that $\sk$ also fully contains $\pk$).
  \item The \emph{encoding} algorithm $\Enc$ takes as input the public key
    $\pk$, a message $m\in \{0,1\}^{\ml(\kappa)}$, a history
    $\hist\in (\Sigma^{\dl(\kappa)})^{*}$ and some state information $s\in
    \{0,1\}^{*}$ and produces a document $d\in \Sigma^{\dl(\kappa)}$ and state
    information $s'\in \{0,1\}^{*}$ by being able to sample from 
    $\mathcal{C}_{\hist,\dl(\kappa)}$. By $\Enc^{\cC}(\pk,m,\hist)$, we denote
    the complete output of $\outl(\kappa)$ documents one by one.
Note that generally, the encoder needs to decide upon document $d_{i}$
before it is able to get samples for the $(i+1)$-th document, as in the
secret-key model of Hopper et al.~\cite[Section 2, ``channel
access'']{hopper2009provably} and the public-key model of von Ahn and
Hopper~\cite[Section 3]{vonAhn2003public,vonAhn2004public}. This
captures the notion that an attacker should have as much information as
possible 
while the stegosystem is not
able to look-ahead into the future. 
To highlight this restriction, we call
this model the \emph{non-look-ahead model}. Note that this is no
restriction for memoryless channels. 
  \item The \emph{decoding} algorithm $\Dec$ takes as input the secret key
    $\sk$,  a sequence of documents $d_{1},\ldots,d_{\outl(\kappa)}$, history $\hist$ and
    outputs a message $m'$. 
  \end{itemize}

\vspace*{2mm}
\noindent 
The following properties are essential for stegosystems $\Steg$ with
output length $\ell=\combine{\Steg}{\outl}(\kappa)$. 
It is \emph{universal} (\emph{black box}), if it works on
every channel without prior knowledge of the probability distribution of
the channel. Clearly channels with too small min-entropy (such as deterministic
channels) are not suitable for steganographic purposes. We thus concentrate only
on channels with sufficiently large min-entropy.

The system is \emph{reliable} if the probability that the
  decoding fails is bounded by a negligible function. Formally, the
  \emph{unreliability} $\unrel_{\Steg,\chan}(\kappa)$ is defined as
  probability that the decoding fails, \ie
  \begin{align*}
  \max_{m,\hist}\{\Pr_{(\pk,\sk)\gets
\combine{\Steg}{\Gen}(1^{\kappa})}[\combine{\Steg}{\Dec}(\sk,\combine{\Steg}{\Enc}^{\chan}(\pk,m,\hist),\hist)\allowbreak\neq m]\}.    
  \end{align*}

The system $\Steg$ is \emph{secure}, if every polynomial
  attacker $\Ward$ (the \emph{warden}) has only negligible success
  probability. 
$\Ward$ works in two phases: In the first phase
  (called $\combine{\Ward}{\Find}$), the warden
has access to the channel $\chan$ and to a \emph{decoding oracle} $\Dec_{\sk}(\cdot)$, that returns
upon input $d_{1},\ldots,d_{\ell}$ and $\hist$ the same result as
$\combine{\Steg}{\Dec}(\sk,(d_{1},\ldots,d_{\ell}),\hist)$. At the end of the first phase, the
warden chooses a message $m^{*}$ and a history $\hist^{*}$. 

At the beginning
of the second phase (called $\combine{\Ward}{\Guess}$), the warden gets a sequence of documents
$\vec{d}^{*}=d_{1}^{*},\ldots,d_{\ell}^{*}$, which is with probability of $50\%$ the
result of $\combine{\Steg}{\Enc}^{\cC}(\pk,m^{*},\hist^{*})$ and with probability of $50\%$ just
the result of sampling $\ell$ random documents from $\cC_{\hist^{*},n}$,
which we denote as $\chan_{\hist^{*},n}^{\ell}$. Speaking more precisely, 
this sampling is done as follows: 
$d^{*}_i\gets \chan_{\hist^{*}\mid\mid d^{*}_1\mid\mid \ldots \mid\mid d^{*}_{i-1},n}$
for $i=1,\ldots,\ell$.
Next, the warden should distinguish between these two cases by
having access to another \emph{decoding oracle}.
 Depending on the oracle type, 
two definitions 
for security were proposed  by Backes and Cachin in~\cite{backes2005active}.

In the \ac{SS-CCA}-model (chosen covertext attack), 
the decoding oracle, denoted as $\Dec_{\sk,\vec{d}^{*}}(\cdot)$,
works like $\Dec_{\sk}(\cdot)$ on every input different from
$\vec{d}^{*}$. If $\Dec_{\sk,\vec{d}^{*}}(\cdot)$ is called with input $\vec{d}^{*}$, it
  simply returns $\bot$. 
In the weaker \ac{SS-RCCA}-model (restricted chosen ciphertext attack), 
the decoding oracle, denoted as 
$\Dec_{\sk,\vec{d}^{*},m^{*}}(\cdot)$, works like $\Dec_{\sk}(\cdot)$ on
most inputs: If the input $d$ equals $\vec{d}^{*}$ or is a
valid encoding of $m^{*}$ (a \emph{replay} of $\vec{d}^{*}$), the oracle simply returns~$\bot$. 

Formally, \ac{SS-CCA}-security is defined by the \ac{SS-CCA}-security game 
given below and  the advantage of
$\Ward=[\combine{\Ward}{\Find},\combine{\Ward}{\Guess}]$ is defined as
\begin{align*}
\adv^{\sscca}_{\Ward,\Steg,\chan}(\kappa) = \bigl|
\Pr[\SSCCA-Dist(\Ward,\Steg,\chan,\kappa)=1]-\frac{1}{2}\bigl|.
\end{align*} 

\vspace*{-6mm}

\myAlgorithm{$\SSCCA-Dist(\Ward,\Steg,\chan,\kappa)$}{
  warden $\Ward$, stegosystem $\Steg$, channel $\chan$, security
  parameter $\kappa$}{
\State $(\pk,\sk)\leftarrow \combine{\Steg}{\Gen}(1^{\kappa})$; $(m^{*},\hist^{*},s)\gets \combine{\Ward}{\Find}^{\Dec_{\sk},\chan}(\pk)$
\State  $b\gets \{0,1\}$
\State $\textbf{if }b=0\textbf{ then }\vec{d}^{*}\leftarrow
             \combine{\Steg}{\Enc}^{\cC}(\pk,m^{*},\hist^{*})\textbf{ else }\vec{d}^{*}\leftarrow \chan_{\hist^{*},n}^{\ell}$
     \State $b' \leftarrow \combine{\Ward}{\Guess}^{\Dec_{\sk,\vec{d}^{*}},\chan}(\pk,m^{*},\hist^{*},s,\vec{d}^{*})$
     \State $\textbf{if } b'=b \
              \textbf{then return} \ 1
             \textbf{ else return} \ 0$
           }
           {\ac{SS-CCA}-security game}
\vspace*{-2mm}\noindent
A stegosystem 
$\Steg$ is called \ac{SS-CCA}-secure against channel $\chan$ if for some 
{negligible} function $\negl$ and 
all wardens $\Ward$,  we have 
$\adv^{\sscca}_{\Ward,\Steg,\chan}(\kappa)\le \negl(\kappa)$. We define 
\ac{SS-RCCA}-security analogously, where the $\Guess$ phase uses $\Dec_{\sk,\vec{d}^{*},m^{*}}$ 
as decoding oracle. Formally, a stegosystem is \emph{universally \ac{SS-CCA}-secure} (or
just universal), if it is \ac{SS-CCA}-secure against all channels of
sufficiently large (\ie super-logarithmic in $\kappa$) min-entropy. 

\vspace*{-2mm}
\subsubsection*{Cryptographic Primitives.} 
\label{sec:crypt-prim}
Due to space constraints, we only give informal definitions of the used 
cryptographic primitives and refer the reader to the textbook
of Katz and Lindell \cite{katz2007introduction} for complete definitions.

We will make use of different cryptographic primitives, namely hash
functions, pseudorandom permutations and \ac{CCA}-secure cryptosystems.
 A \emph{\ac{CRHF}}
  $\algh=(\combine{\algh}{\Gen},\combine{\algh}{\Eval})$ is a pair of
  \acp{PPTM} such that $\combine{\algh}{\Gen}$ upon input
 $1^{\kappa}$
  produces a key $k\in \{0,1\}^{\kappa}$. The keyed function
  $\combine{\algh}{\Eval}$ takes the key
  $k\gets \combine{\algh}{\Gen}(1^{\kappa})$ and a string
  $x\in \{0,1\}^{\combine{\algh}{\fin}(\kappa)}$ and produces a string
  $\combine{\algh}{\Eval}_{k}(x)$ of length
  $\combine{\algh}{\fout}(\kappa) < \combine{\algh}{\fin}(\kappa)$.
  The probability of every \acs{PPTM} $\Fi$ to
   find a collision -- two strings $x\neq x'$ such that
  $\combine{\algh}{\Eval}_{k}(x)=\combine{\algh}{\Eval}_{k}(x')$ -- upon
  random choice of $k$ is negligible. For a set $X$, denote by $\Perm(X)$ the set of all
  permutations on $X$.  A \emph{\ac{PRP}}
  $\algp=(\combine{\algp}{\Gen},\combine{\algp}{\Eval})$ is a pair of
  \acp{PPTM} such that $\combine{\algp}{\Gen}$ upon input $1^{\kappa}$
  produces a key $k\in \{0,1\}^{\kappa}$. The keyed function
  $\combine{\algp}{\Eval}$ takes the key
  $k\gets \combine{\algp}{\Gen}(1^{\kappa})$ and is a permutation on the
  set $\{0,1\}^{\combine{\algp}{\fin}(\kappa)}$. An attacker $\Dist$
  (the \emph{distinguisher}) is given black-box access to $P\sgets
  \Perm(\{0,1\}^{\combine{\algp}{\fin(\kappa)}})$ or to $\combine{\algp}{\Eval}_{k}$ for
  a randomly chosen $k$ and should distinguish between those
  scenarios. The success probability of every $\Dist$ is negligible.
A \emph{\ac{PKES}}
$\PKES=(\combine{\PKES}{\Gen},\combine{\PKES}{\Enc},\allowbreak\combine{\PKES}{\Dec})$
is a triple of \acp{PPTM}
such that $\combine{\PKES}{\Gen}(1^{\kappa})$ produces a pair of keys $(\pk,\sk)$ with
$|\pk| = \kappa$ and $|\sk| =  \kappa$. The key $\pk$ is called the
\emph{public key} and the key $\sk$ is called the \emph{secret key}
(or \emph{private key}). 
The
\emph{encryption algorithm} $\combine{\PKES}{\Enc}$ takes as input 
 $\pk$ and a plaintext $m\in
\{0,1\}^{\combine{\PKES}{\ml}(\kappa)}$ of length
$\combine{\PKES}{\ml}(\kappa)$ and
outputs a ciphertext $c\in \{0,1\}^{\combine{\PKES}{\cl}(\kappa)}$ of
length $\combine{\PKES}{\cl}(\kappa)$. The \emph{decryption
  algorithm} $\combine{\PKES}{\Dec}$ takes as input  $\sk$ and the
ciphertext $c$ and produces a plaintext $m\in
\{0,1\}^{\combine{\PKES}{\ml}(\kappa)}$. Informally, we will allow an
attacker $\Att$ to
first choose a message $m^{*}$ that should be encrypted and denote this
by $\combine{\Att}{\Find}$. In the next step ($\combine{\Att}{\Guess}$), the
attacker gets $c^{*}$, which is either $\Enc(\pk,m^{*})$ or a random
bitstring. He is allowed to decrypt ciphertexts different from $c^{*}$
and his task is to distinguish between these two cases.
This security notion is known as security
against \acp{CCA+}. For an attacker $\Att$ on cryptographic
primitive $\Pi\in \{\hash,\prp,\pkess\}$ with implementation $X$, we
write $\Adv^{\Pi}_{\Att,X,\chan}(\kappa)$ for the success probability of
$\Att$ against $X$ relative to channel $\chan$, \ie the attacker $\Att$
also has access to a sampling oracle of $\chan$. In case of encryption schemes,
the superscript $\ccad$ is used instead of $\pkess$.

Due to the
works~\cite{dolev2000cca,goldreich2013enhancements,luby1985permutations,naor1989hash}
we know that \ac{CCA+}-secure cryptosystems and \acp{PRP} can be
constructed from doubly-enhanced trapdoor permutations resp.~one-way
functions, while \acp{CRHF} can not be constructed from them in a
\emph{black-box} way, as Simon showed an
oracle-separation in~\cite{simon1998collision}.

\section{Detecting the Scheme of Backes and Cachin}
\label{sec:detect}
In order to understand the difference between \ac{SS-CCA}-security and the
closely related, but weaker, \ac{SS-RCCA}-security, we give a short
presentation of the universal \ac{SS-RCCA}-stegosystem of Backes and
Cachin \cite{backes2005active}. We also show
that their system is not \ac{SS-CCA}-secure, which was already noted by Hopper
in \cite{hopper2005covertext}. 
The proof of insecurity 
nicely illustrates the difference between the
security models. It 
also highlights the main difficulty of
\ac{SS-CCA}-security: One needs to prevent so called \emph{replay
  attacks}, where the warden constructs upon 
stegotext $c$ another stegotext $c'$~--~the \emph{replay} of $c$~--~that
embeds the same message as $c$. 

Backes and Cachin \cite{backes2005active} showed that there is a
universal \ac{SS-RCCA}-secure stegosystem under the assumption that a
\ac{RCCA+}-secure cryptosystem exists.\footnote{The definition of a \ac{RCCA+}-secure 
cryptosystem is analogous to \ac{SS-RCCA}-security 
given in Section \ref{sec:steg-steg}.} They make 
use of 
a technique 
called \emph{rejection} \emph{sampling}. Let $\{G_{\kappa}\}_{\kappa\in \mathbb{N}}$ be a strongly 2-universal
hash function family, $f\in G_{\kappa}$ a function, $\cC$ be a
channel, $\hist$ be a history and $b\in \{0,1\}$ be a bit. The algorithm 
$\rejsam(f,\cC,b,\hist)$ samples documents
$d\leftarrow \cC_{\hist,\dl(\kappa)}$ until it finds a document $d^{*}$ such
that $f(d^{*})=b$ or until it has sampled $\kappa$ documents.
 If
$\PKES$ is an \ac{RCCA+}-secure 
cryptosystem, 
they define a
stegosystem that computes $(b_{1},\ldots,b_{\ell})\leftarrow \combine{\PKES}{\Enc}(\pk,m)$
and then sends $d_{1},d_{2},\ldots,d_{\ell}$, where
$d_{i}\leftarrow \rejsam(f,\cC,b_{i},
\hist || d_{1} || \ldots || d_{i-1})$.
The function $f\in G_{\kappa}$ is also part of the public key. The system
is universal as it does not assume any knowledge on~$\cC$.

They then prove that this stegosystem is \ac{SS-RCCA}-secure. And indeed,
one can show that their stegosystem is not \ac{SS-CCA}-secure by constructing
a generic warden 
$\Ward$  that
works as follows: The first phase $\combine{\Ward}{\Find}$ chooses as message
$m^{*}=00\cdots0$ and as 
$\hist^{*}$
the empty history $\varnothing$. The second phase
$\combine{\Ward}{\Guess}$ gets 
$\vec{d}^{*}=d^{*}_{1},\ldots,d^{*}_{\ell}$ which is either a sequence of
 random documents or the output of the stegosystem on $\pk$, $m^{*}$, and
$\hist^{*}$.  The warden $\Ward$ now computes another document
$d'$ via rejection sampling that embedds $f(d^{*}_{\ell})$ 
(the \emph{replay} of $\vec{d}^{*}$) and decodes
$d^{*}_{1},\ldots,d^{*}_{\ell-1},d'$ via the decoder of the rejection
sampling stegosystem. It then
returns $0$ if the returned message $m'$ consists only of zeroes. If $\vec{d}^{*}$ was a sequence
of  random documents, it is highly unlikely that $\vec{d}^{*}$ decodes
to a message that only consists of zeroes. If $\vec{d}^{*}$ was produced by
the stegosystem, the decoder only returns
something different from the all-zero-message if $d'=d^{*}_{\ell}$ which
is highly unlikely. The warden $\Ward$ has advantage of
$1-\negl(\kappa)$ and the stegosystem is thus not
\ac{SS-CCA}-secure. Backes and Cachin posed the question
whether a universal \ac{SS-CCA}-secure stegosystem exists.

\section{An High-Level View of our Stegosystem}
\label{sec:an-high-level}
The stegosystem of Backes and Cachin only achieves \ac{SS-RCCA}-security
as a single ciphertext has many different possible encodings in terms of
the documents used.  Hopper achieves \ac{SS-CCA}-security by limiting
those encodings: Due to the sampleability of the channel, each
ciphertext has exactly one deterministic encoding in terms of the
documents. While Hopper achieves \ac{SS-CCA}-security, he needs to give
up the universality of the stegosystem, as a description of the channel is
hard-wired into the stegosystem. In order to handle as many
channels as possible, we will allow many different encodings of the same
ciphertext, but make it hard to find them for anyone but the
stegoencoder. To simplify the presentation, we focus on the case
of embedding a single bit per document. Straightforward modifications
allow embedding of $\log(\kappa)$ bits.

Our stegosystem, named  $\PKSteg^{*}$ will use the following
approach to encode a message $m$: It first samples, for sufficiently large $N$,  
a set $D$ of $N$ documents  from the channel
$\chan$ and uses a strongly $2$-universal hash function
$f\in G_{\kappa}$ to split these documents into documents $D_0$ that encode
bit $0$ (\ie $D_{0}=\{d\in D\mid f(d)=0\}$) and $D_1$ that
encode bit~$1$ (\ie $D_{1}=\{d\in D\mid f(d)=1\}$). 
Now we encrypt the message $m$ via a certain public-key
encryption system, named $\PKES^{\WOR}$ (described in the next section), and obtain a ciphertext
$\vec{b}=b_{1},\ldots,b_{L}$ of length $L=\lfloor N/8 \rfloor$. 
Next our goal is to order the
documents in $D$ into a sequence $\vec{d}=d_{1},\ldots,d_{N}$ such that
the first $L$ documents $d_{1},\ldots,d_{L}$ encode $\vec{b}$ (\ie
$f(d)_{i}=b_{i}$). This ordering is performed by the algorithm
$\algo{generate}$. 
However, the attacker
still has several possibilities for a replay attack on this scheme, for example:  
\begin{itemize}
\item He could exchange some document $d_{i}$ by another document $d'_{i}$ with
  $f(d_{i})=f(d'_{i})$ (as $f$ is publicly known) and the sequence
  $d_{1},\ldots,d_{i-1},d'_{i},d_{i+1},\ldots,d_{N}$ would be a replay of
  $\vec{d}$. Such attacks will be called \emph{sampling attacks}. To prevent the
  attacker from exchanging a sampled document by a
  non-sampled one, we also encode a hash-value of all sampled documents $D$ and
  transmit this hash value to Bob. 
\item The attacker can exchange documents $d_{i}$ and $d_{j}$, with $i < j$ and
  $f(d_{i})=f(d_{j})$, and the resulting sequence
  $d_{1},\ldots,d_{i-1},d_{j},d_{i+1},\ldots,d_{j-1},d_{i},d_{j+1},\ldots,d_{N}
  $ would be a replay of $\vec{d}$. Such attacks will be called \emph{ordering
    attacks}. We thus need to prevent the attacker from
  exchanging the positions of sampled documents. We achieve this by
  making sure that the ordering of the documents generated by $\algo{generate}$
  is deterministic, \ie for each set of documents $D$ and each ciphertext
  $\vec{b}$, the ordering $\vec{d}$ generated by $\algo{generate}$ is
  deterministic. This property is achieved by using \acp{PRP} to sort the
  sampled documents $D$. The corresponding keys of the \acp{PRP} are also
  transmitted to Bob and the stegodecoder can thus also compute this
  deterministic ordering. 
\end{itemize}
In total, our stegoencoder $\combine{\PKSteg^{*}}{\Enc}$ works on a secret message $m$ and 
on a publicly known hash-function $f$ as follows:
\begin{enumerate}
\item Sample $N$ documents $D$ from the channel;
\item Get a hash-key $k_{\algh}$ and compute a hash-value
  $h=\combine{\algh}{\Eval}_{k_{\algh}}(\lex(D))$ of the sampled documents,
  where $\lex(D)$ denotes the sequence of elements of $D$ in lexicographic
  order. This prevents sampling attacks, where a sampled document is replaced by a
  non-sampled one;
\item Get two\footnote{We believe that one permutation suffices. But in order
    to improve the readability of the proof for security, we use two
    permutations in our stegosystem.} \ac{PRP}-keys $k_{\algp}$ and $k'_{\algp}$
  that will be used to determine the unique ordering of the documents in $D$ via
  $\algo{generate}$. This prevents ordering attacks, where the order of the
  sampled documents is switched;
\item Encrypt the concatenation of $m, k_{\algh}, k_{\algp}, k'_{\algp}, h$ via
  a certain \acl{PKES} $\PKES^{\WOR}$ and obtain the ciphertext $\vec{b}$ of
  length $L=\lfloor N/8 \rfloor$. As long as $\PKES^{\WOR}$ is secure, the
  stegodecoder is thus able to verify whether all sampled documents were sent
  and can also verify the ordering of the documents.
\item Compute the ordering $\vec{d}$ of the documents $D$ via $\algo{generate}$
  that uses the \ac{PRP} keys $k_{\algp}$ and $k'_{\algp}$ to determine the
  ordering of the documents. It also uses the ciphertext $\vec{b}$ to guarantee
  that the first $L$ send documents encode the ciphertext $\vec{b}$, \ie 
        $b_1\ldots b_L=f(d_1)\ldots f(d_L)$;
\item Send the ordering of the documents $\vec{d}$. 
\end{enumerate}

To decode a sequence  of documents $\vec{d} = d_1,\ldots,d_{N}$, the stegodecoder of
$\PKSteg^{*}$ computes the ciphertext $b_1=f(d_1),\ldots,b_{L}=f(d_{L})$ encoded
in the first $L$ documents of $\vec{d}$. It then decodes this 
ciphertext $b_1\ldots b_L$ via $\PKES^{\WOR}$ to obtain the message $m$, the
\ac{PRP} keys $k_{\algp}$ and $k'_{\algp}$, the hash-key $k_{\algh}$ and the
hash-value $h$. First it verifies the hash-value  by
checking whether $\combine{\algh}{\Eval}_{k_{\algh}}(\lex(\{d_1,\ldots,d_N\}))$
equals the hash-value $h$ to prevent sampling attacks. It then uses the \ac{PRP}
keys $k_{\algp}$ and $k'_{\algp'}$ to compute an ordering of the received
documents via $\algo{generate}$ to verify that no ordering attack was used. 
If these validations are successful, the decoder $\combine{\PKSteg^{*}}{\Dec}$
returns $m$;  Otherwise, it concludes that  $\vec{d}$ is not
a valid stegotext and returns~$\bot$.

Intuitively, it is clear that a successful sampling attack on this scheme would
break the \acl{CRHF} $\algh$, as it needs to create a collision of
$\lex(D)$ in order to pass the first verification step. Furthermore, a
successful ordering attack would need manipulate the ciphertext $\vec{b}$ and
thus break the security of the \acl{PKES} $\PKES^{\WOR}$, as
the \ac{PRP} keys $k_{\algp}$ and $k'_{\algp}$ guarantee a deterministic
ordering of the documents. 


As explained above, our stegoencoder computes the ordering $\vec{d}=
d_1,\ldots,d_{N}$ of the documents $D=\{d_1,\ldots,d_{N}\}$ via the
deterministic algorithm $\algo{generate}$, that is given the following
parameters: the set of documents $D$, the hash-function $f$ and the ciphertext
$\vec{b}$ to ensure that the first documents of the ordering encode $\vec{b}$.
It has furthermore access to the \ac{PRP} keys $k_{\algp}$ and $k'_{\algp}$ that
guarantee a deterministic ordering of the documents in $D$ and thus prevents
ordering attacks. As the ordering $\vec{d}$ produced by $\algo{generate}$ is
sent by the stegoencoder, this ordering must be indistinguishable from a random
permutation on $D$ (which equals the channel distribution) in order to be
undetectable. As $f(d_1)=b_1,\ldots, f(d_L)=b_L$, not every distribution upon
the ciphertext $\vec{b}$ can be used to guarantee that $\vec{d}$ is
indistinguishable from a uniformly random permutation. 
This indistinguishability is guaranteed by requiring that the
ciphertext $\vec{b}$ is distributed according to a certain distribution
corresponding to a random process modeled by drawing black and white balls from
an urn without replacement. In our setting, the documents in $D$ will play the
role of the balls and the coloring is given by the function $f$.

Section~\ref{sec:bias} describes this random process in
detail and proves that we can indeed construct a public-key encryption
system that produces ciphertexts that are indistinguishable from this 
process. Section~\ref{sec:ordering} contains a formal description of
$\algo{generate}$, proves that no attacker can produce a replay of its
output and shows that the generated permutation is indeed
indistinguishable from a random permutation. Finally,
Section~\ref{sec:stegosystem} contains the complete description of the
stegosystem.

\section{Obtaining Biased Ciphertexts}
\label{sec:bias}

We will now describe
a probability distribution and show how one can
derive a symmetric encryption scheme with ciphertexts that are
indistinguishable from this distribution. In order to do this, we first define a
channel that represents the required probability distribution together with appropriate
parameters, use Theorem~\ref{thm:hopper} to derive a stegosystem for this
channel, and finally derive a cryptosystem from this stegosystem.

Based upon a \ac{CCA+}-secure public-key cryptosystem $\PKES$,
Hopper~\cite{hopper2005covertext} constructs for every efficiently sampleable channel $\chan$ an
\ac{SS-CCA}-secure stegosystem $\PKSteg_{\chan}$ by ``derandomizing'' the
rejection sampling algorithm. The only requirement
upon the channel $\chan$ is the existence of the efficient sampling
algorithm and that the stegoencoder and the stegodecoder use the
same sampling algorithm. Importantly, due to the efficient sampleability of $\chan$,
the encoder of $\PKSteg_{\chan}$ does not need an access to the sample oracle.
Thus, we get the following result.

\begin{theorem}[Theorem~2 in \cite{hopper2005covertext}]
\label{thm:hopper}
If $\chan$ is an efficiently sampleable channel
and $\PKES$ is a \ac{CCA+}-secure
public-key cryptosystem 
(which can be constructed from doubly enhanced
trapdoor permutations\footnote{See \eg the work \cite{goldreich2013enhancements} of
  Goldreich and Rothblum.}) then there is a
stegosystem $\PKSteg_{\chan}$ (without an access to the sample oracle)
such that for all wardens $\Ward$ there is a negligible function $\negl$
such that
  \begin{align*}
    \Adv^{\sscca}_{\Ward,\PKSteg_{\chan},\chan}(\kappa)\leq
    \negl(\kappa)+ 2 ^{-\minent(\chan,\kappa)/2}.
  \end{align*}
\end{theorem}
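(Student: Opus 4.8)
The plan is to realize $\PKSteg_{\chan}$ by \emph{derandomizing} rejection sampling so that every ciphertext admits exactly one legal document encoding; this uniqueness is precisely what upgrades \ac{SS-RCCA} to \ac{SS-CCA}. The public key consists of the public key of $\PKES$ together with a strongly $2$-universal hash $f$ from documents to bits. To embed $m$, the encoder encrypts $m$ under $\PKES$ to obtain a ciphertext $c=c_{1}\cdots c_{\ell}$, and then, exploiting efficient sampleability, runs the channel's own sampling algorithm (hence needs no sample oracle) fed with pseudorandom coins derived via a generator from $c$; for each bit $c_{i}$ it keeps the first document $d_{i}$ in this deterministic sample stream with $f(d_{i})=c_{i}$. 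Since everything after fixing $c$ is deterministic, the sequence $\vec{d}=d_{1}\cdots d_{\ell}$ is a function of $c$ alone. The decoder reads off $c_{i}=f(d_{i})$, decrypts $c$ to recover $m$, re-derives the coins from $c$, re-runs the identical sampling, and outputs $m$ only if the regenerated sequence equals the received one (returning $\bot$ otherwise).

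First I would establish reliability. As $f$ is balanced and $\chan$ has min-entropy $\minent(\chan,\kappa)$, each deterministic rejection-sampling step fails to hit a document with $f(d_{i})=c_{i}$ only with probability exponentially small in the number of tries; a union bound over the $\ell$ documents shows the encoder succeeds except with negligible probability, while the verification step always accepts an honestly produced sequence, so $\unrel_{\PKSteg_{\chan},\chan}$ is negligible.

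Next, for \ac{SS-CCA}-security I would reduce to the \ac{CCA+}-security of $\PKES$. Given a warden $\Ward$, I build an attacker $\Att$ that forwards $\Ward$'s chosen message $m^{*}$ as its own challenge plaintext, receives $c^{*}$ (either $\combine{\PKES}{\Enc}(\pk,m^{*})$ or a uniform string), and turns $c^{*}$ into the challenge stegotext $\vec{d}^{*}$ by the deterministic embedding above. When $c^{*}$ is a genuine encryption, $\vec{d}^{*}$ is exactly an honest stegotext, matching the $b=0$ branch. When $c^{*}$ is uniform, I must instead view $\vec{d}^{*}$ as $\ell$ fresh channel documents (the $b=1$ branch): now the generator seed is genuinely uniform, so I may first replace its pseudorandom coins by truly random ones at negligible cost, and then invoke a leftover-hash-lemma argument, built on the $2$-universality of $f$ and the min-entropy of $\chan$, to bound the statistical distance between embedding a uniform bit string and drawing documents from $\chan$ by the additive $2^{-\minent(\chan,\kappa)/2}$.

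The hard part will be letting $\Att$ answer $\Ward$'s decoding-oracle queries with only its own restricted \ac{CCA+} decryption oracle. For a queried sequence $\vec{d}$ whose read-off ciphertext $c=f(d_{1})\cdots f(d_{\ell})$ differs from $c^{*}$, $\Att$ decrypts $c$ with its oracle, re-embeds $c$ to verify, and answers accordingly; this also covers every query in the $\combine{\Ward}{\Find}$ phase, where $c^{*}$ is not yet fixed. The delicate case is $c=c^{*}$ with $\vec{d}\neq\vec{d}^{*}$, where the \ac{CCA+} oracle refuses to help. Here I exploit the \emph{uniqueness} of the encoding: $c^{*}$ decrypts to a single message whose deterministic re-embedding is exactly $\vec{d}^{*}$, so any $\vec{d}\neq\vec{d}^{*}$ sharing this ciphertext must fail the verification step and $\Att$ may safely answer $\bot$ without querying. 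This step rules out replay attacks and is where determinism of the embedding is indispensable; collecting the \ac{CCA+}, generator, and statistical error terms yields the claimed bound.
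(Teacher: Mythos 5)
There is a genuine gap here, and it is fatal to the construction exactly as you describe it. You derive the derandomizing coins for the channel sampler from the ciphertext $c$ itself (``pseudorandom coins derived via a generator from $c$'', and the decoder ``re-derives the coins from $c$''). But then every ingredient of your verification procedure is public: $f$ is part of the public key, the generator is a public algorithm, and the channel's sampling algorithm is public (that is what efficiently sampleable means here). Hence the warden can do exactly what your decoder does, without $\sk$: read off $c=f(d_{1})\cdots f(d_{\ell})$ from any received sequence, expand $c$ through the generator, re-run the deterministic rejection sampling, and test whether the output equals the received sequence. An honest stegotext always passes this test, whereas $\ell$ genuine channel documents pass it only with negligible probability, so the warden distinguishes with advantage $1-\negl(\kappa)$~--~the system is not even secure against a passive observer, let alone \ac{SS-CCA}. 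The same defect resurfaces inside your reduction: you invoke pseudorandomness of the generator to swap its output for truly random coins, but a PRG gives no guarantee once its seed is known to the distinguisher, and here the seed ($c^{*}$) is readable off the challenge documents via the public $f$.

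The repair~--~and it is what Hopper's construction actually does (the paper offers no proof of Theorem~\ref{thm:hopper}; it imports it as Theorem~2 of \cite{hopper2005covertext})~--~is to gate the verification behind the secret key: the encoder draws a fresh random seed $\sigma$, encrypts it together with the message, $c=\combine{\PKES}{\Enc}(\pk,\sigma||m)$, and uses $G(\sigma)$ as the sampling coins. Only the decoder, after decrypting $c$, can recover $\sigma$, re-run the embedding, and compare; the warden knows $m^{*}$ but never $\sigma$, so it cannot perform the check and indistinguishability from the channel survives. With this change the rest of your outline is structurally sound: the uniqueness argument for answering oracle queries whose read-off ciphertext equals $c^{*}$ (answer $\bot$ without querying) goes through verbatim, queries with $c\neq c^{*}$ are handled by the attacker's own \ac{CCA+} decryption oracle, and the additive $2^{-\minent(\chan,\kappa)/2}$ term comes, as you say, from the bias of the $2$-universal $f$ on a min-entropy channel.
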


Note that the system $\PKSteg_{\chan}$ is guaranteed to be secure (under the
assumption that \ac{CCA+}-secure public-key cryptosystems exist), if the
channel $\chan$ is efficiently sampleable and has min-entropy
$\omega(\log \kappa)$. We call such a channel \emph{suitable}.


The probability distribution for the ciphertexts we are interested in 
is the distribution for the bitstrings $\vec{b}$ we announced in the the previous section.
As we will see later, the required probability can be described equivalently as follows:
\begin{itemize}
\item We are given $N$ elements: $N_{0}$
         of them are labeled with $0$ and the remaining $N-N_{0}$
         elements are labeled with $1$. 
\item We draw randomly a sequence of $K$ elements from the set 
         (drawing without replacements) and 
         look at the generated bitstring $\vec{b}=b_{1}\ldots b_{K}$ of length $K$ 
         determined by the labels of the elements. 
\end{itemize}

%
%

We will assume that there are
enough elements of both types, \ie that $N_{0}\geq K$ and $N-N_{0}\geq K$. 
The resulting probability distribution, denoted as $D^{\WOR}_{(N,N_{0},K)}$, upon
bitstrings of length $K$ is then given as
\begin{align}\label{eq:hypergeometric}
\begin{split}
  &\Pr[D^{\WOR}_{(N,N_{0},K)}=b_{1}\ldots
    b_{K}]\ = \
  \frac{1}{\binom{K}{|\vec{b}|_0}}\cdot \frac{\binom{N_0}{|\vec{b}|_0}\cdot \binom{N-N_0}{K-|\vec{b}|_0}}{\binom{N}{K}}\ =\\
  &\bigl(\prod_{j=0}^{K-1}\frac{1}{N-j}\bigr) \cdot
  \bigl(\prod_{j=0}^{|\vec{b}|_{0}-1}[N_{0}-j] \bigr)
  \cdot \bigl(\prod_{j=0}^{|\vec{b}|_{1}-1}[N-N_{0}-j] \bigr), 
  \end{split}
\end{align}
where $|\vec{b}|_{0}$ denotes the number of zero bits in $\vec{b}=b_{1},\ldots,b_{K}$ and $|\vec{b}|_{1}$ the
number of one bits in $\vec{b}$. Note that the distribution on the number of
zeroes within such bitstrings is a hypergeometric distribution with
parameters $N$, $N_{0}$, and $K$. 

Now we will construct a channel $\chan^{\WOR}$ upon key parameter
$\kappa$ with document length $n=\dl(\kappa)=\kappa$. In the definition below, $\binary(x)_{y}$ denotes the
binary representation  of length exactly $y$ for the integer $x$.
\begin{itemize}
\item For the empty history $\varnothing$, let
  $\chan^{\WOR}_{\varnothing,\kappa}$ be the uniform distribution on all strings 
  $\binary(N)_{\lceil \kappa/2 \rceil}\binary(N_{0})_{\lfloor \kappa/2 \rfloor}$
  that range over all positive integers 
  $N,N_0\leq 2^{ \lfloor \kappa/2 \rfloor}$ such that $N\ge 8\kappa$ 
  and $1/3 \leq N_{0}/N \leq 2/3$ 
  (in our construction we need initially a stronger condition than just $N_{0}\geq \kappa$ and $N-N_{0}\geq \kappa$).
\item If the history is of the form
   $\hist'=\binary(N)_{\lceil \kappa/2 \rceil}\binary(N_{0})_{\lfloor
   \kappa/2 \rfloor}\hist$ for some $\hist\in \{0,1\}^{*}$
   then we consider two cases: if $|\hist|\le\frac{1}{8}N$ then the distribution
   $\chan^{\WOR}_{\hist',\kappa}$ equals 
   $D^{\WOR}_{(N-|\hist|,N_{0}-|\hist|_{0},\kappa)}$;
   Otherwise, \ie if $|\hist|> \frac{1}{8}N$ then $\chan^{\WOR}_{\hist',\kappa}$ equals the 
   uniform distribution over $\{0,1\}^{\kappa}$.
\end{itemize}

It is easy to see that the min-entropy
$\minent(\chan^{\WOR},n)=\min_{\hist'}\{\minent(\chan^{\WOR}_{\hist',n})\}$
of the channel $\chan^{\WOR}$ is obtained for the history 
$\hist'=\binary(N)_{\lceil \kappa/2 \rceil}\binary(N_{0})_{\lfloor \kappa/2 \rfloor} \hist$,
with $8\kappa\le N \le 2^{ \lfloor \kappa/2 \rfloor}$ and such that 
$(i)$ $N_0=\frac{1}{3}N$  and $\hist=00\ldots 0$ of length $\frac{1}{8}N-\kappa$ or 
$(ii)$ $N_0=\frac{2}{3}N$ and $\hist=11\ldots 1$ of length $\frac{1}{8}N-\kappa$. 
In the first case we get that the min-entropy of the distribution 
$\chan^{\WOR}_{\hist',n}$ is achieved on the bitstring $11\ldots 1$ of length $\kappa$
and in the second case on $00\ldots 0$ of length $\kappa$.
By Eq. \eqref{eq:hypergeometric} the probabilities to get
such strings are equal to each other and, since $\kappa\le N/8$, they  can be estimated as follows:
\[
  \prod_{j=0}^{\kappa-1}\frac{2N/3-j}{7N/8-\kappa - j}
  \ \le \ \left( \frac{2N/3}{7N/8-\kappa} \right)^{\kappa}\ \le\ \left( \frac{2N/3}{6N/8} \right)^{\kappa}
  \ = \ (8/9)^{\kappa}.
\]
Thus, we get that $\minent(\chan^{\WOR},n)\ge \kappa \log(9/8) $.

Moreover one can efficiently simulate the choice of $N, N_{0}$, the
sampling process of $D^{\WOR}_{(N,N_{0},\kappa)}$ and the uniform sampling
in $\{0,1\}^{\kappa}$. Therefore we can conclude
\begin{lemma}
  \label{lem:about:perm:chan:2}
  The channel $\chan^{\WOR}$ is suitable, \ie
  it is efficiently sampleable and has min-entropy $\omega(\log \kappa)$. 
  Furthermore, for history 
  $\hist=\binary(N)_{\lceil \kappa/2 \rceil}\binary(N_{0})_{\lfloor\kappa/2 \rfloor}$, with 
  $8\kappa\le N \le 2^{\lceil \kappa/2 \rceil}$ and 
  $1/3 \leq N_{0}/N \leq 2/3$, and for any integer $\ell \le \frac{N}{8 \kappa}$, 
  the bitstrings $\vec{b}=b_1\ldots b_K$ of length $K= \kappa \cdot \ell \le N/8$  
  obtained by the concatenation of $\ell$ consecutive  documents sampled from the channel
  with history $\hist$, \ie
  $b_i \gets \chan^{\WOR}_{\hist b_1 \ldots b_{i-1},n=\kappa}$, 
  have distribution  $D^{\WOR}_{(N,N_{0},K)}$.
\end{lemma}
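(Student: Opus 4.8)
The plan is to treat the two assertions separately. For suitability, the preceding computation already supplies everything needed: the choice of $N,N_0$, the urn draw of $D^{\WOR}_{(N,N_0,\kappa)}$, and the uniform fallback are all simulable in time polynomial in $\kappa$ (one maintains an explicit urn and reads off labels), so $\chan^{\WOR}$ is efficiently sampleable; and the bound $\minent(\chan^{\WOR},n)\ge \kappa\log(9/8)$ established above is $\Omega(\kappa)$, hence $\omega(\log\kappa)$. Thus $\chan^{\WOR}$ is suitable, and this part amounts to collecting the preceding remarks.

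For the distribution claim the key observation is that $D^{\WOR}_{(N,N_0,K)}$ is exactly the law of the label sequence obtained by drawing $K$ balls without replacement from an urn containing $N$ balls, $N_0$ of them labelled $0$. The channel $\chan^{\WOR}$ is built so that sampling the $i$-th document \emph{continues} this same draw: before the $i$-th document the history records the $(i-1)\kappa$ labels already produced, of which (say) $Z_{i-1}$ are zeros, and the channel returns a fresh block of $\kappa$ labels distributed as $D^{\WOR}_{(N-(i-1)\kappa,\,N_0-Z_{i-1},\,\kappa)}$, i.e. a draw of $\kappa$ balls from the depleted urn. First I would check that we stay in this regime throughout: before the $i$-th document the relevant history length is $(i-1)\kappa \le (\ell-1)\kappa < \kappa\ell = K \le N/8$, so the condition $|\hist|\le\frac{1}{8}N$ of the channel definition holds for every $i\le\ell$ and the uniform fallback is never triggered.

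It then remains to verify that the per-document probabilities multiply to $D^{\WOR}_{(N,N_0,K)}$. Writing $z_i$ for the number of zeros in the $i$-th block and $Z_i=z_1+\dots+z_i$ (so $Z_0=0$ and $Z_\ell=|\vec{b}|_0$), the chain rule expresses $\Pr[b_1\dots b_K]$ as the product over $i=1,\dots,\ell$ of the per-block probabilities, each given by Eq.~\eqref{eq:hypergeometric} with parameters $(N-(i-1)\kappa,\,N_0-Z_{i-1},\,\kappa)$. The three products then telescope cleanly: the denominator factors $\prod_{j=0}^{\kappa-1}\bigl(N-(i-1)\kappa-j\bigr)^{-1}$ combine across all blocks into $\prod_{j=0}^{K-1}(N-j)^{-1}$; the zero-numerator factors $\prod_{j=0}^{z_i-1}\bigl(N_0-Z_{i-1}-j\bigr)$ combine into $\prod_{j=0}^{|\vec{b}|_0-1}(N_0-j)$; and, symmetrically, the one-numerator factors combine into $\prod_{j=0}^{|\vec{b}|_1-1}(N-N_0-j)$. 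This is precisely the right-hand side of Eq.~\eqref{eq:hypergeometric} for the full string $\vec{b}$ of length $K$, which proves the claim.

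The arithmetic itself is routine, so the only point requiring genuine care, and the closest thing to an obstacle, is the bookkeeping that keeps every per-document draw in the hypergeometric-without-replacement regime rather than the uniform fallback; this is exactly what the hypothesis $\ell\le N/(8\kappa)$ buys, and once it is in place the telescoping is forced by the consistency of sampling without replacement in batches.
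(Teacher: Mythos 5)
Your proposal is correct and follows the same path as the paper: the paper likewise obtains suitability by collecting the preceding min-entropy bound and the efficient-simulation remark, and it dismisses the distribution claim with the single sentence that it ``follows directly from the construction of the channel.'' Your chain-rule/telescoping computation (together with the check that $(i-1)\kappa < K \le N/8$ keeps every draw in the without-replacement regime rather than the uniform fallback) is exactly the verification the paper leaves implicit, carried out correctly.
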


A proof for the second statement of the lemma follows directly from the construction of 
the channel. 
Now, combining the first claim of the lemma  with
Theorem~\ref{thm:hopper} we get 
the following corollary.

\begin{corollary}
If doubly enhanced trapdoor permutations exists, 
there is a stegosystem $\PKSteg_{\chan^{\WOR}}$ (without an access to the sample oracle)
such that for all wardens $\Ward$ there is a negligible function $\negl$
such that
 $   \Adv^{\sscca}_{\Ward,\PKSteg_{\chan^{\WOR}},\chan^{\WOR}}(\kappa)\leq \negl(\kappa).$
\end{corollary}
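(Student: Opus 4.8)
The plan is to derive the corollary by directly composing the two facts already in hand: the suitability of $\chan^{\WOR}$ established in Lemma~\ref{lem:about:perm:chan:2}, and the generic construction of Theorem~\ref{thm:hopper}. First I would unpack the hypothesis that doubly enhanced trapdoor permutations exist. By the works cited after Theorem~\ref{thm:hopper} (in particular~\cite{goldreich2013enhancements}), such permutations yield a \ac{CCA+}-secure public-key cryptosystem $\PKES$, which is exactly the primitive that Theorem~\ref{thm:hopper} takes as input. So the assumption of the corollary suffices to instantiate the theorem.

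Next I would apply Theorem~\ref{thm:hopper} to the channel $\chan=\chan^{\WOR}$. The first claim of Lemma~\ref{lem:about:perm:chan:2} guarantees that $\chan^{\WOR}$ is efficiently sampleable, so the hypotheses of the theorem are met, and it produces a stegosystem $\PKSteg_{\chan^{\WOR}}$ that needs no access to the sample oracle and satisfies, for every warden $\Ward$ and some negligible $\negl'$ (possibly depending on $\Ward$),
\begin{align*}
\Adv^{\sscca}_{\Ward,\PKSteg_{\chan^{\WOR}},\chan^{\WOR}}(\kappa)\leq \negl'(\kappa)+2^{-\minent(\chan^{\WOR},\kappa)/2}.
\end{align*}

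It then remains only to show that the additive term $2^{-\minent(\chan^{\WOR},\kappa)/2}$ is itself negligible; this is the sole step requiring any argument, and it is routine. Here I would reuse the min-entropy estimate derived just before the lemma, namely $\minent(\chan^{\WOR},\kappa)\geq \kappa\log(9/8)$. Substituting this bound gives $2^{-\minent(\chan^{\WOR},\kappa)/2}\leq (8/9)^{\kappa/2}$, which decays exponentially in $\kappa$ and is therefore eventually smaller than the inverse of every polynomial. Since the sum of two negligible functions is again negligible, setting $\negl(\kappa)=\negl'(\kappa)+(8/9)^{\kappa/2}$ yields $\Adv^{\sscca}_{\Ward,\PKSteg_{\chan^{\WOR}},\chan^{\WOR}}(\kappa)\leq\negl(\kappa)$ for all wardens $\Ward$, as claimed.

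I do not expect a genuine obstacle: all the real work has been front-loaded into the definition of $\chan^{\WOR}$ and its min-entropy analysis, so the corollary is essentially a bookkeeping composition of the preceding results. The only point deserving care is confirming that the $2^{-\minent/2}$ slack in Theorem~\ref{thm:hopper} is absorbed into a negligible function, which is precisely why the channel $\chan^{\WOR}$ was engineered to have min-entropy $\omega(\log\kappa)$ (indeed linear in $\kappa$) rather than merely the bare minimum needed for the hypergeometric process.
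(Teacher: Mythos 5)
Your proposal is correct and follows exactly the paper's route: the paper obtains the corollary by ``combining the first claim of Lemma~\ref{lem:about:perm:chan:2} with Theorem~\ref{thm:hopper},'' which is precisely your composition, and your only added step~--~absorbing the slack term via $2^{-\minent(\chan^{\WOR},\kappa)/2}\leq (8/9)^{\kappa/2}$ from the bound $\minent(\chan^{\WOR},\kappa)\geq \kappa\log(9/8)$ computed before the lemma~--~is the bookkeeping the paper leaves implicit.
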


Based upon this stegosystem $\PKSteg=\PKSteg_{\chan^{\WOR}}$, we construct a
public-key cryptosystem $\PKES^{\WOR}$,
with ciphertexts of length $\combine{\PKES^{\WOR}\!\!}{\cl}(\kappa) = \kappa\cdot \combine{\PKSteg}{\cl}(\kappa)$
 such
 that $\PKES^{\WOR}$ also has another algorithm, called
 $\combine{\PKES^{\WOR}\!\!}{\setup}$ that takes 
 parameters: two integers  $N$ and $N_{0}$ which satisfy
  $8\cdot\combine{\PKES^{\WOR}\!\!}{\cl}(\kappa)\le N \le 2^{ \lfloor \kappa/2 \rfloor}$
  and $N_{0}/N \in [1/3,2/3]$. Calling $\combine{\PKES^{\WOR}\!\!}{\setup}(N,N_0)$
  stores the values $N,N_0$ such that $\combine{\PKES^{\WOR}\!\!}{\Enc}$ and
  $\combine{\PKES^{\WOR}\!\!}{\Dec}$ can use them.
\begin{itemize}
\item The key generation $\combine{\PKES^{\WOR}\!\!}{\Gen}$ simply equals the
  key generation algorithm $\combine{\PKSteg}{\Gen}$.
\item The encoding algorithm $\combine{\PKES^{\WOR}\!\!}{\Enc}$ takes as
  parameters the public key $\pk$ and a message $m$. 
  It
  then simulates the encoder  $\combine{\PKSteg}{\Enc}$ on key $\pk$,
  message $m$ and history $\hist=\binary(N)_{\lceil \kappa/2 \rceil}\binary(N_{0})_{\lfloor \kappa/2
    \rfloor}$ and produces a bitstring
  of length $\combine{\PKES^{\WOR}\!\!}{\cl}(\kappa)=\combine{\PKSteg}{\outl}(\kappa)\cdot \kappa$. 
  \item The decoder $\combine{\PKES^{\WOR}\!\!}{\Dec}$ simply inverts this 
    process by simulating the stegodecoder $\combine{\PKSteg}{\Dec}$ on
    key $\sk$ and history $\hist=\binary(N)_{\lceil \kappa/2 \rceil}\binary(N_{0})_{\lfloor \kappa/2
    \rfloor}$. 
\end{itemize}

Clearly, the ciphertexts of $\combine{\PKES^{\WOR}\!\!}{\Enc}(\pk,m)$ are
indistinguishable from the distribution 
$D^{\WOR}_{(N,N_{0},\combine{\PKES^{\WOR}\!}{\cl}(\kappa))}$
by the second statement  of Lemma~\ref{lem:about:perm:chan:2}.  This
generalization of Theorem~\ref{thm:hopper} yields the
following corollary:
\begin{corollary}
\label{cor:indist:cca:perm}
  If doubly-enhanced trapdoor permutations exist, there is a secure public-key cryptosystem
  $\PKES^{\WOR}$,  equipped with the algorithm
  $\combine{\PKES^{\WOR}\!\!}{\setup}$ 
  that takes two parameters
  $N$ and $N_{0}$, such that its ciphertexts are
  indistinguishable from the probability distribution $D^{\WOR}_{(N,N_{0},\combine{\PKES^{\WOR}\!}{\cl(\kappa)}))}$
  whenever $N$ and $N_0$
  satisfy that  
  $8\cdot\combine{\PKES^{\WOR}\!\!}{\cl}(\kappa)\le N \le 2^{ \lfloor \kappa/2 \rfloor}$
  and $N_{0}/N \in [1/3,2/3]$.
\end{corollary}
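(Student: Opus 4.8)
The plan is to show that the construction of $\PKES^{\WOR}$ spelled out above inherits its security directly from the \ac{SS-CCA}-security of $\PKSteg=\PKSteg_{\chan^{\WOR}}$ established in the preceding corollary, with the second statement of Lemma~\ref{lem:about:perm:chan:2} supplying the identification of the ``sampling'' branch of the security game with the distribution $D^{\WOR}$. Concretely, ``secure'' here should be read in the \ac{CCA+}-style: no efficient adversary equipped with a decryption oracle can tell $\combine{\PKES^{\WOR}}{\Enc}(\pk,m^{*})$ from a fresh draw of $D^{\WOR}_{(N,N_{0},\combine{\PKES^{\WOR}}{\cl}(\kappa))}$, and this single indistinguishability statement simultaneously yields both claims of the corollary. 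Thus the proof reduces to one black-box reduction plus some bookkeeping.

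First I would record the bijection underlying the construction. After $\combine{\PKES^{\WOR}}{\setup}(N,N_{0})$ fixes the history $\hist=\binary(N)_{\lceil\kappa/2\rceil}\binary(N_{0})_{\lfloor\kappa/2\rfloor}$, the ciphertext $\combine{\PKES^{\WOR}}{\Enc}(\pk,m)$ is by definition the concatenation of the $\combine{\PKSteg}{\outl}(\kappa)$ documents, each of length $\kappa$, produced by $\combine{\PKSteg}{\Enc}^{\chan^{\WOR}}(\pk,m,\hist)$, so it is a string of length $\combine{\PKES^{\WOR}}{\cl}(\kappa)=\kappa\cdot\combine{\PKSteg}{\outl}(\kappa)$; decryption is the inverse parse-and-decode map. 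Hence ciphertexts and length-$\combine{\PKSteg}{\outl}(\kappa)$ document sequences are in one-to-one correspondence. The constraint $8\cdot\combine{\PKES^{\WOR}}{\cl}(\kappa)\le N$ is exactly $K:=\combine{\PKES^{\WOR}}{\cl}(\kappa)\le N/8$, equivalently $\ell:=\combine{\PKSteg}{\outl}(\kappa)\le N/(8\kappa)$, which is the hypothesis of Lemma~\ref{lem:about:perm:chan:2}; therefore genuine draws of $\combine{\PKSteg}{\outl}(\kappa)$ consecutive documents under history $\hist$ concatenate \emph{exactly} to $D^{\WOR}_{(N,N_{0},K)}$, never hitting the uniform fallback of $\chan^{\WOR}$.

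Next I would turn any attacker $\Att$ against $\PKES^{\WOR}$ into a warden $\Ward$ against $\PKSteg_{\chan^{\WOR}}$. In the $\Find$ phase $\Ward$ runs $\combine{\Att}{\Find}$, answering each decryption query on a ciphertext $c$ by parsing $c$ into a document sequence and forwarding it to the stego decoding oracle $\Dec_{\sk}$; when $\Att$ commits to $m^{*}$, the warden outputs $m^{*}$ together with the fixed $\hist^{*}=\hist$. In the $\Guess$ phase $\Ward$ receives $\vec{d}^{*}$, passes its concatenation $c^{*}$ to $\combine{\Att}{\Guess}$, relays further queries through $\Dec_{\sk,\vec{d}^{*}}$ (which returns $\bot$ on precisely the forbidden ciphertext, matching the \ac{CCA+} rule via the bijection), and echoes $\Att$'s guess. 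By the identification from the previous paragraph, the stegotext branch of the \ac{SS-CCA} game corresponds to $\combine{\PKES^{\WOR}}{\Enc}(\pk,m^{*})$ and the real-document branch to a $D^{\WOR}$ draw, so the distinguishing advantage of $\Att$ equals $\Adv^{\sscca}_{\Ward,\PKSteg_{\chan^{\WOR}},\chan^{\WOR}}(\kappa)$, which is negligible by the preceding corollary.

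Since the hypergeometric identities and the min-entropy bound for $\chan^{\WOR}$ are already in place, the one \textbf{delicate point} is the bookkeeping that makes the reduction lossless: I must verify that the parse/concatenate map is a genuine bijection, that it sends the single forbidden ciphertext $c^{*}$ to the single forbidden stegotext $\vec{d}^{*}$ so that the two decoding-oracle restrictions coincide, and that along the entire sampling process the history grows by at most $K\le N/8$ symbols so that Lemma~\ref{lem:about:perm:chan:2} delivers $D^{\WOR}$ \emph{exactly} rather than only approximately. Any slippage here would replace the clean equality of advantages by an uncontrolled additive term, so this is where I expect to spend the effort.
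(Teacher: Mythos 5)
Your proposal is correct and takes essentially the same route as the paper: the paper also obtains the corollary by viewing $\PKES^{\WOR}$ as a wrapper around the stegosystem $\PKSteg_{\chan^{\WOR}}$, invoking the second statement of Lemma~\ref{lem:about:perm:chan:2} to identify genuine channel draws under history $\binary(N)_{\lceil \kappa/2 \rceil}\binary(N_{0})_{\lfloor \kappa/2 \rfloor}$ with $D^{\WOR}_{(N,N_{0},K)}$, and then transferring the SS-CCA security guaranteed by Theorem~\ref{thm:hopper}. Your explicit attacker-to-warden reduction, together with the bookkeeping on the parse/concatenate bijection, the matching of forbidden queries, and the bound $K\le N/8$ avoiding the uniform fallback, simply spells out the step the paper compresses into ``clearly \ldots{} this generalization of Theorem~\ref{thm:hopper} yields the corollary.''
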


\section{Ordering the Documents}
\label{sec:ordering}

As described before, to prevent replay attacks, we need to order the sampled
documents. This is done via the algorithm $\algo{generate}$ described in this
section. To improve the readability, we will abbreviate some terms and define
$L=\combine{\PKES^{\WOR}\!}{\cl}(\kappa)$ and
$n=\combine{\Steg^{*}\!}{\dl}(\kappa)$, where $\PKES^{\WOR}$ is the public-key
encryption scheme from the last section and $\Steg^{*}$ is our target
stegosystem that we will provide later on. We also define $N=8L$.

To order the set of documents $D\subseteq \Sigma^n$, we use the algorithm
$\algo{generate}$, presented below. It takes the set of documents
$D$ with $|D|=N$, a hash function $f\colon \Sigma^n \to \{0,1\}$ from
$G_{\kappa}$, a bitstring $b_1,\ldots,b_{L}$, and two
keys $k_{\algp}, k'_{\algp}$ for \ac{PRP}s.
 It then uses the \ac{PRP}s to find the right order of the documents.  

\myAlgorithm{$\algo{generate}(D,f,b_{1},\ldots,b_{L},k_{\algp},k'_{\algp})$}{
set $D$ with $|D|=N$, hash function $f$, bits $b_{1},\ldots,b_{L}$, \ac{PRP}-keys
$k_{\algp},k'_{\algp}$}{
    \State let $D_{0}=\{d\in D\mid f(d)=0\}$ and $D_{1}=\{d\in D\mid f(d)=1\}$
    \Comment{We assert that  $|D|=N$, and furthermore $|D_0|\in[N/3,2N/3]$}
  \For{$i=1$ to $L$}
   \State $d_{i} := \arg\min_{d\in
        D_{b_{i}}}\{\combine{\algp}{\Eval}_{k_{\algp}}(d)\}$;  $D_{b_{i}}$ := $D_{b_{i}}\setminus \{d_{i}\}$
      \EndFor
   \State let $D' = D_{0}\cup D_{1}$ \Comment{collect remaining documents}
   \For{$i=L+1,\ldots,N$}
   \State  $d_{i} := \arg\min_{d\in
     D'}\{\combine{\algp}{\Eval}_{k'_{\algp}}(d)\}$;  $D'$ :=
   $D'\setminus \{d_{i}\}$
    \EndFor
    \State \AlgReturn{$d_{1},d_{2},\ldots,d_{N}$}
    }{
    Algorithm}

    Note that the permutation $\combine{\algp}{\Eval}_{k_{\algp}}$ is a
    permutation upon the set $\{0,1\}^{n}$ (\ie on the documents themselves) and
    the canonical ordering of $\{0,1\}^{n}$ thus implicitly gives us an ordering
    of the documents. 

    We note the following important property of $\algo{generate}$ that
    shows where the urn model of the previous section comes into
    play. For  uniform random permutations $P$ and $P'$, we denote
    by $\algo{generate}(\cdots,P,P')$ the run of $\algo{generate}$,
    where the use of $\combine{\algp}{\Eval}_{k_{\algp}}$ is replaced by
    $P$ and the use of $\combine{\algp}{\Eval}_{k'_{\algp}}$ is replaced
    by $P'$. If the bits $\vec{b}=b_{1},\ldots,b_{L}$ are distributed
    according to $D^{\WOR}_{(N,|D_{0}|,L)}$, the resulting distribution
    on the documents then equals the channel distribution.
    \begin{lemma}
      \label{lem:infotheoretic}
      Let $\chan$ be any memoryless channel, $f$ be some hash function
      and $D$ be a set of $N=8L$ documents of $\chan$ such that
      $N/3\le |D_{0}| \le 2N/3$, where $D_{0}=\{d\in D\mid f(d)=0\}$. If
      the permutations $P,P'$ are uniformly random and the bitstring
      $\vec{b}=b_{1},\ldots,b_{L}$ is distributed according to
      $D^{\WOR}_{(N,|D_{0}|,L)}$, the
      output of $\algo{generate}(D,f,\vec{b},P,P')$ is a uniformly random
      permutation of $D$.
    \end{lemma}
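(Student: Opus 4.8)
The plan is to prove the two distributions on orderings of $D$ agree by decomposing each one according to the label pattern $\vec{b}=b_1\ldots b_L$ of the first $L$ documents. First I would unfold $\algo{generate}$ to obtain its structural description conditioned on a fixed $\vec{b}$ with $z:=|\vec{b}|_0$ zeros. Unwinding the first loop shows that the positions $i\le L$ with $b_i=0$, taken in increasing index order, receive exactly the $z$ documents of $D_0$ of smallest $P$-value in increasing $P$-value order, so the extraction is order-preserving; symmetrically the one-positions receive the $L-z$ smallest-$P$-value documents of $D_1$. The second loop then places the $N-L$ remaining documents into positions $L+1,\dots,N$ in increasing order of their $P'$-value. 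This is immediate from the pseudocode, with $\combine{\algp}{\Eval}_{k_{\algp}}$ replaced by $P$ and $\combine{\algp}{\Eval}_{k'_{\algp}}$ by $P'$.

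Next I would analyze a \emph{genuinely} uniform permutation $\pi$ of $D$ through the same lens. Writing $\vec{b}=(f(\pi_1),\dots,f(\pi_L))$, the tuple $(\pi_1,\dots,\pi_L)$ is a uniform ordered $L$-subset of $D$, i.e.\ sampling without replacement, so $\vec{b}$ has exactly the urn law $D^{\WOR}_{(N,|D_0|,L)}$ of Section~\ref{sec:bias}; a direct computation matches $\frac{(N-L)!}{N!}\cdot\frac{|D_0|!}{(|D_0|-z)!}\cdot\frac{|D_1|!}{(|D_1|-(L-z))!}$ against Eq.~\eqref{eq:hypergeometric}. This is precisely the marginal that $\algo{generate}$ is fed by hypothesis. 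I would then establish the conditional law given $\vec{b}$: the number of permutations with label pattern $\vec{b}$ equals $\frac{|D_0|!}{(|D_0|-z)!}\cdot\frac{|D_1|!}{(|D_1|-(L-z))!}\cdot(N-L)!$, and this product structure shows that, conditioned on $\vec{b}$, the ordered sub-sequence on the zero-positions is a uniform ordered $z$-subset of $D_0$, the sub-sequence on the one-positions is an \emph{independent} uniform ordered $(L-z)$-subset of $D_1$, and, conditioned on these, the tail is a uniform ordering of the $N-L$ leftover documents.

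It then remains to verify that $\algo{generate}$ reproduces exactly this conditional law. The key observation is that a uniform random permutation $P$ of $\{0,1\}^n$ induces a uniform random total order on $D$ whose restrictions to the disjoint sets $D_0$ and $D_1$ are \emph{independent} uniform total orders; this is seen cleanly by realizing $P$ via i.i.d.\ uniform keys and sorting, so that the orders on $D_0$ and $D_1$ are functions of disjoint key-sets. Since extracting a prefix of a uniform order is itself uniform over ordered subsets, the order-preserving $\arg\min$-extraction of the $z$ smallest $P$-values in $D_0$ returns a uniform ordered $z$-subset of $D_0$, and likewise for $D_1$, independently. Finally, because $P'$ is a fresh permutation independent of $P$, ordering any fixed leftover set by $P'$-value yields a uniform ordering of it that is independent of everything derived from $P$ (this independence is exactly why the construction uses a second permutation instead of reusing $P$). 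Hence the conditional laws given $\vec{b}$ coincide, and since the $\vec{b}$-marginals coincide, the output of $\algo{generate}(D,f,\vec{b},P,P')$ is uniform over permutations of $D$.

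The step I expect to be the main obstacle is the clean justification that the restrictions of $P$ to $D_0$ and $D_1$ are independent and uniform, and that the greedy $\arg\min$ selection therefore returns \emph{independent} uniform ordered subsets matched correctly to positions: one must argue both the order-preserving property (earlier positions of a given label receive strictly smaller $P$-values) and that extracting a prefix of a uniform order is uniform over ordered subsets. Everything else reduces to bookkeeping around the hypergeometric identity already fixed for $D^{\WOR}_{(N,|D_0|,L)}$.
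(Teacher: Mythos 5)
Your proof is correct, but it takes a genuinely different route than the paper's. The paper argues pointwise: it fixes an arbitrary target permutation $\vec{\hat{d}}$ of $D$ and computes $\Pr[\algo{generate}(D,f,\vec{b},P,P')=\vec{\hat{d}}]$ by the chain rule over positions $i=1,\ldots,N$, showing that every conditional factor equals $1/(N-i+1)$: for $i\leq L$ the urn probability $(|D_{\hat{b}_{i}}|-|\vec{\hat{b}}[i-1]|_{\hat{b}_{i}})/(N-i+1)$ that the bit is correct cancels exactly against the probability $1/(|D_{\hat{b}_{i}}|-|\vec{\hat{b}}[i-1]|_{\hat{b}_{i}})$ that $P$ selects the correct document of $D_{\hat{b}_{i}}$, and for $i>L$ the fresh permutation $P'$ gives $1/(N-i+1)$ outright, so the product telescopes to $1/N!$. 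You instead match the two laws wholesale by conditioning on the label pattern $\vec{b}$: the hypergeometric identity shows that the $\vec{b}$-marginal of a uniform permutation is exactly $D^{\WOR}_{(N,|D_{0}|,L)}$, and the conditional laws given $\vec{b}$ are matched via structural facts about uniform random orders (restrictions of a uniform order to disjoint sets are independent and uniform, prefixes of a uniform order are uniform ordered subsets, and $P'$ is independent of $P$). Both arguments ultimately rest on the same two ingredients -- the without-replacement law of the labels and the exchangeability of $P$ within each class $D_{0},D_{1}$ -- but the paper consumes them factor-by-factor inside one computation, whereas you isolate them as standalone lemmas. Your decomposition has the advantage of explaining \emph{why} $D^{\WOR}$ is the right ciphertext distribution (it is precisely the label marginal of a uniform permutation), and it modularizes the roles of $\vec{b}$, $P$ and $P'$; the price is that you must actually prove the independence-of-restrictions claim you flag as the main obstacle, and your i.i.d.-keys (equivalently, a direct counting) argument does close that gap, so nothing essential is missing. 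One small point worth making explicit in either write-up: the hypothesis $N/3\leq|D_{0}|\leq 2N/3$ guarantees $|D_{0}|,|D_{1}|>L$, so the sets $D_{b_{i}}$ are never exhausted and all the conditional quantities appearing in both proofs are well defined.
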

    \begin{proof}
      Fix any document set $D$ of size $N=8L$ and a function $f$ that splits
      $D$ into $D_{0}\dot{\cup}D_{1}$, with $|D_{0}| \geq N/3$ and
      $|D_{1}|\geq N/3$. Let $\vec{\hat{d}}=\hat{d}_{1},\ldots,\hat{d}_{N}$ be any permutation on
      $D$. We will prove that the probability (upon bits $\vec{b}$ and
      permutations $P$, $P'$) that $\vec{\hat{d}}$ is produced, is $1/N!$ and thus
      establish the result. Let
      $\vec{d}=d_{1},\ldots,d_{N}$ be the random
      variables that denote the outcome of
      $\algo{generate}(D,f,b_{1},\ldots,b_{L},P,P')$. 

      Note that if $\vec{d}[i]$ (resp.~$\vec{\hat{d}}[i]$) denotes the
      prefix of length $i$ of $\vec{d}$ (resp.~$\vec{\hat{d}}$), 
      then using the chain rule formula we get
    \begin{align*}
      \Pr_{\vec{b},P,P'}[d_{1}d_{2}\ldots
        d_{N}=\hat{d}_{1}\hat{d}_{2}\ldots \hat{d}_{N}]\ = \
      \prod_{i=1}^{N}\Pr_{\vec{b},P,P'}[d_{i}=\hat{d}_{i} \mid \vec{d}[i-1]=\vec{\hat{d}}[i-1]].
    \end{align*}
    To estimate each of the factors of the product, we consider two cases:
    \begin{itemize}
    \item Case $i\leq L$: Let $\vec{\hat{b}}=\hat{b}_1,\ldots,\hat{b}_L$ be the
      bitstring such that 
  $\hat{b}_i=f(\hat{d}_i)$ and let $\vec{\hat{b}}[i]$ be the prefix
  $\hat{b}_1,\ldots,\hat{b}_i$ of $\vec{\hat{b}}$ of length $i$.
    Clearly, for $i\leq L$ it holds that the event $d_i=\hat{d}_{i}$ 
    under the condition $\vec{d}[i-1]=\vec{\hat{d}}[i-1]$ occurs iff (A) $d_{i}\in D_{\hat{b}_{i}}$ and
    (B) $d_{i}$ is put on position $|\vec{\hat{b}}[i]|_{\hat{b}_{i}}$ by the
    permutation $P$ with respect to $D_{\hat{b}_{i}}$. Due to the distribution
    of bit $b_{i}$ in the random bits $\vec{b}$, the event $d_{i}\in D_{\hat{b}_{i}}$ occurs with
    probability $(|D_{\hat{b}_{i}}|-|\vec{\hat{b}}[i-1]|_{\hat{b}_{i}})/(N-i+1)$  (under
    the above condition). As $\vec{d}[i-1]=\vec{\hat{d}}[i-1]$ holds,
    exactly $|\vec{\hat{b}}[i-1]|_{\hat{b}_{i}}$ documents from $D_{\hat{b}_{i}}$ are
    already used in the output. As $P$ is a uniform random permutation,
    the probability that $d_{i}$ is put on position $|\vec{\hat{b}}[i]|_{\hat{b}_{i}}$ by the
    permutation $P$ (with respect to $D_{\hat{b}_{i}}$) is thus
    $1/(|D_{\hat{b}_{i}}|-|\vec{\hat{b}}[i-1]|_{\hat{b}_{i}})$. Since (A) and (B) are independent,
    we conclude for $i\leq L$ that the probability 
    $\Pr_{\vec{b},P,P'}[d_i=\hat{d}_i\mid \vec{d}[i-1] = \vec{\hat{d}}[i-1]]$
    is equal to
    \begin{align*}
      &
     { \textstyle    \Pr_{\vec{b}}[d_{i}\in D_{\hat{b}_{i}}\mid \vec{d}[i-1]=\vec{\hat{d}}[i-1]] \ \times} \\
      &\quad \quad
      { \textstyle  \Pr_{P}[\text{$P$ puts $d_{i}$ on position $|\vec{\hat{b}}[i]|_{\hat{b}_{i}}$} \mid \vec{d}[i-1]=\vec{\hat{d}}[i-1]] \ = }\\
        &\frac{|D_{\hat{b}_{i}}| - |\vec{\hat{b}}[i-1]|_{\hat{b}_{i}}}{N-i+1} \cdot 
       \frac{1}{|D_{\hat{b}_{i}}| - |\vec{\hat{b}}[i-1]|_{\hat{b}_{i}}} \ = \ \frac{1}{N-i+1}.
    \end{align*}
\item Case $i> L$: 
  As the choice
  of $P'$ is independent from the choice
  of $P$, the
    remaining $2L$ items are ordered completely random. Hence,  for $i > L$ we also
    have 
    \begin{align*}
      \Pr_{\vec{b},P,P'}[d_{i}=\hat{d}_{i} \mid
      \vec{d}[i-1]=\vec{\hat{d}}[i-1]] \ = \ \frac{1}{N-i+1}.
    \end{align*}
    \end{itemize}
Putting it together, we get
    \begin{align*}
      \Pr_{\vec{b},P,P'}[d_{1}d_{2}\ldots
        d_{N}=\hat{d}_{1}\hat{d}_{2}\ldots \hat{d}_{N}] \ = \ 
      \prod_{i=1}^{N}\frac{1}{N-i+1}=\frac{1}{N!}.\tag*{\qed}
    \end{align*}
  \end{proof}

    As explained above, a second property that we need is that no attacker
    should be able to produce a
    ``replay'' of the output of $\algo{generate}$.
    Below, we formalize this notion and analyze
    the security of the algorithm. An attacker $\algo{A}$ on
    $\algo{generate}$ is a \ac{PPTM}, that receives nearly the same
    input as $\algo{generate}$: a set $D$ of $N$ documents, a hash
    function $f\colon \Sigma^n \to \{0,1\}$ from the family
    $G_{\kappa}$, a sequence $b_{1},\ldots,b_{L}$ of $L$ bits, and a key
    $k_{\algh}$ for the \ac{CRHF} $\algh$. Then $\algo{A}$ outputs a
    sequence $d'_{1},\ldots,d'_{N}$ of documents. We say that the
    algorithm $\algo{A}$ is \emph{successful} if
    \begin{enumerate}
    \item     $f(d_{i})=f(d'_{i})$ for all $i=1,\ldots,N$,
    \item
      $d'_{1},\ldots,d'_{N}=\algo{generate}(D',f,b_{1},\ldots,b_{L},k_{\algp},k'_{\algp})$,  and   
    \item $\combine{\algh}{\Eval}_{k_{\algh}}(\lex(D'))=\combine{\algh}{\Eval}_{k_{\algh}}(\lex(D))$,
    \end{enumerate}
    where $D'$ denotes the set $\{d'_{1},\ldots,d'_{N}\}$ and, recall,  $\lex(X)$
    denotes the sequence of elements of set $X$ in lexicographic
    order. We can then conclude the following lemma.

\begin{lemma}[Informal]
  \label{lem:produce_negl}
  Let $D\subseteq \Sigma^n$ be a set of documents with $|D|=N$, let
  $b_{1},\ldots,b_{L}$ be a bitstring, and $f\in G_{\kappa}$.  For every
  attacker $\algo{A}$ on $\algo{generate}$, there is a collision finder
  $\Fi$ for the \ac{CRHF} $\algh$ such that the probability that
  $\algo{A}$ is successful on $D,f,b_{1},\ldots,b_{L},k_{\algh}$ is
  bounded by $ \Adv_{\Fi,\algh,\chan}^{\hash}(\kappa)$.
\end{lemma}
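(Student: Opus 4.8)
The plan is to build a collision finder $\Fi$ for $\algh$ out of any successful attacker $\algo{A}$, so that each success of $\algo{A}$ directly exhibits a collision. On the side of $\Fi$ the reduction is essentially trivial: upon receiving a random key $k_{\algh}\gets\combine{\algh}{\Gen}(1^{\kappa})$, the finder runs $\algo{A}$ on the fixed input $D,f,b_{1},\ldots,b_{L}$ together with this $k_{\algh}$, forwarding any channel queries of $\algo{A}$ to its own sampling oracle, and obtains the output sequence $d'_{1},\ldots,d'_{N}$. It then sets $D'=\{d'_{1},\ldots,d'_{N}\}$ and simply outputs the pair $(\lex(D),\lex(D'))$. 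Observe that $\Fi$ never needs the \ac{PRP} keys $k_{\algp},k'_{\algp}$, since these are neither part of $\algo{A}$'s input nor required to emit the candidate collision.

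The core of the argument is to show that whenever $\algo{A}$ is successful its output must rest on a set $D'$ distinct from $D$, so that $(\lex(D),\lex(D'))$ is a genuine pair of distinct strings colliding under $\combine{\algh}{\Eval}_{k_{\algh}}$. I would argue by a case split on whether $D'=D$. If $D'=D$, then because $\algo{generate}$ is fully deterministic and depends on its first argument only as a set (not through any input ordering), condition~2 forces
\begin{align*}
d'_{1},\ldots,d'_{N}&=\algo{generate}(D',f,b_{1},\ldots,b_{L},k_{\algp},k'_{\algp})\\
&=\algo{generate}(D,f,b_{1},\ldots,b_{L},k_{\algp},k'_{\algp}),
\end{align*}
so the output coincides with the original ordering $d_{1},\ldots,d_{N}$ and is therefore not a replay at all. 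Hence a successful (non-trivial) attack necessarily has $D'\neq D$. Since the encoding $\lex(\cdot)$ is injective on sets, $D'\neq D$ yields $\lex(D')\neq\lex(D)$, while condition~3 guarantees $\combine{\algh}{\Eval}_{k_{\algh}}(\lex(D'))=\combine{\algh}{\Eval}_{k_{\algh}}(\lex(D))$. Thus $\Fi$ outputs a valid collision exactly in the event that $\algo{A}$ succeeds, and the bound $\Pr[\algo{A}\text{ successful}]\le\Adv_{\Fi,\algh,\chan}^{\hash}(\kappa)$ follows.

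The conceptually delicate point — and the one I would be most careful about — is the reliance on the \emph{determinism} of $\algo{generate}$ in the degenerate case $D'=D$: it is precisely the design decision to order the documents through the \ac{PRP}s rather than by fresh randomness that collapses condition~2 to equality with the original output and thereby rules out a same-set replay. Everything else is bookkeeping: condition~1 guarantees that the preserved $f$-values make the candidate a replay in the decoding sense but is not needed for the reduction itself, and the sampling-oracle access of $\Fi$ matters only so that $\algo{A}$ can be faithfully simulated, which is why the channel $\chan$ appears in the advantage. I would finally confirm that no additive loss arises: the map from a success of $\algo{A}$ to a collision of $\Fi$ is deterministic and always defined, so the two events coincide rather than being related up to error terms, which is exactly why the stated bound has no extra additive terms.
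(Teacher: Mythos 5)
Your proof is correct and follows essentially the same route as the paper: a wrapper reduction in which $\Fi$ simulates the experiment on the \ac{CRHF} challenge key and outputs $(\lex(D),\lex(D'))$, combined with the observation that the determinism of $\algo{generate}$ under the fixed \ac{PRP} keys makes a successful replay with $D'=D$ impossible, so that success forces $D'\neq D$ and condition~3 then yields the collision. The remaining differences are cosmetic: the paper splits $D'\neq D$ into $D'\subsetneq D$ (ruled out because $\algo{generate}$ only accepts sets of size exactly $N$, so condition~2 fails) and $D'\setminus D\neq\emptyset$ (the collision case), and in the paper's formal experiment $\algo{A}$'s input is the ordered output of $\algo{generate}$ rather than the set $D$ itself, so the paper's finder does sample $k_{\algp},k'_{\algp}$ on its own in order to produce that input---neither point changes the argument or the claimed bound.
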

The formal definition of ``$\algo{A}$ is successful'' as well as a
formal statement of the lemma can be found in the Appendix, Section
\ref{app:proofs:att:generate}.

\section{The Steganographic Protocol $\Steg^{*}$}
\label{sec:stegosystem}
We now have all of the ingredients of our stegosystem, namely the
\acs{CCA}-secure cryptosystem $\PKES^{\WOR}$ from Section~\ref{sec:bias} and the
ordering algorithm $\algo{generate}$ from Section~\ref{sec:ordering}. To improve
the readability, we will abbreviate some terms and define
$n=\combine{\Steg^{*}\!}{\dl}(\kappa)$,
$\ell=\combine{\Steg^{*}\!}{\outl}(\kappa)$, and
$L=\combine{\PKES^{\WOR}\!}{\cl}(\kappa)$, where $\PKES^{\WOR}$ is the
public-key encryption scheme from Section~\ref{sec:bias} and $\Steg^{*}$ is the
stegosystem that we will define in this section. We also let $N=8L$.

In the following, let $\chan$ be a memoryless channel, $\algp$ be a \ac{PRP}
relative to $\chan$, $\algh$ be a \ac{CRHF} relative to $\chan$ and
$\mathcal{G}=\{G_{\kappa}\}_{\kappa\in \mathbb{N}}$ be a strongly
$2$-universal hash family. 
Remember, that $\PKES^{\WOR}$ has the algorithm $\combine{\PKES^{\WOR}\!\!}{\setup}$
that takes the additional parameters
$N,N_{0}\leq 2^{\lceil  \kappa/2 \rceil}$, such that if 
$N \ge 8\cdot\combine{\PKES^{\WOR}\!\!}{\cl}(\kappa)$ and $N_{0}/N \in [1/3,2/3]$
then  the output of
$\combine{\PKES^{\WOR}\!\!}{\Enc}(\pk,m)$ is indistinguishable from
$D^{\WOR}_{(N,N_{0},\combine{\PKES^{\WOR}\!}{\cl}(\kappa))}$ (see Section
\ref{sec:bias} for a discussion). Furthermore, we assume that
$\PKES^{\WOR}$ has very sparse support, \ie the ratio of valid ciphertexts
compared to $\{0,1\}^{\combine{\PKES^{\WOR}\!}{\cl}(\kappa)}$ is
negligible: If $\combine{\PKES^{\WOR}\!}{\Enc}(\pk,m)$ is called, we
first use some public key encryption scheme $\PKES$ with very sparse
support to compute $c\gets \combine{\PKES}{\Enc}(\pk,m)$ and then
encrypt $c$ via $\PKES^{\WOR}$. This construction is due to
Lindell~\cite{lindell2003cca} and also maintains the
indistinguishability of the output of $\combine{\PKES^{\WOR}\!}{\Enc}$ and
the distribution $D^{\WOR}$, as this properties hold for all fixed
messages $m$.
Now we are ready to provide our stegosystem named $\PKSteg^{*}$. 
Its main core is the ordering algorithm $\algo{generate}$.
\begin{itemize}
\item 
The key generating $\combine{\PKSteg^{*}}{\Gen}$ queries
$\combine{\PKES^{\WOR}\!}{\Gen}$ for a key-pair $(\pk,\sk)$ and chooses a
hash-function $f\sgets G_{\kappa}$. The public key of the stegosystem will
be $\pk^{*}=(\pk,f)$ and the secret key will be $\sk^{*}=(\sk,f)$.

\item The encoding
algorithm $\combine{\PKSteg^{*}}{\Enc}$ presented below (as $\chan_n$ is
memoryless we skip $\hist$ in the description) works as described in
Section~\ref{sec:an-high-level}: 
It chooses appropriate keys, samples documents $D$, computes  a hash value of $D$,
generates bitstring $\vec{b}$ via $\PKES^{\WOR}$,
and finally orders
the documents via $\algo{generate}$. \footnote{That the number of
  produced documents is always divisible by $8$ does not hurt the
  security: The warden always gets the same number of documents, whether
  steganography is used or not.}

\item To decode a sequence of documents $d_1,\ldots,d_{N}$, the
  stegodecoder 
  $\combine{\PKSteg^{*}}{\Dec}$ first computes the bit string
  $b_1=f(d_1),\ldots,b_{N}=f(d_{N})$ and computes the number
   $N_{0}=|\{d_{i}\colon f(d_{i})=0\}|$.
In case $|\{ d_1,\ldots,d_{N}\}|< N$ or $N_{0}/N\not\in [1/3,2/3]$,
the decoder $\combine{\PKSteg^{*}}{\Dec}$ returns
$\bot$ and halts.
Otherwise, using $\combine{\PKES^{\WOR}\!}{\Dec}$ 
with $\sk$ and parameters $N, N_{0}$, 
it decrypts from the ciphertext $b_1,b_2, \ldots, b_{L}$ the message $m$,  the keys
$k_{\algh}, k_{\algp}, k'_{\algp}$ and the hash-value $h$. 
It then checks whether 
the hash-value $h$ is correct and
whether  
$d_1,\ldots,d_{N}=\algo{generate}(\{d_{1},\ldots,d_{N}\},f,b_{1},\ldots,b_{L},k_{\algp},k'_{\algp})$.
Only if this is
the case, the message $m$ is returned. Otherwise, 
$\combine{\PKSteg^{*}}{\Dec}$ decides that it can not
decode the documents and returns $\bot$.

\end{itemize}

\myAlgorithm{$\combine{\PKSteg^{*}}{\Enc}(\pk^{*},m)$
}{public key $\pk^{*}=(\pk,f)$, message $m$; access to channel $\chan_{n}$}{
\State let $L=\combine{\PKES^{\WOR}\!}{\cl}(\kappa)$ and $N=8L$;
 let $D_{0}:=\emptyset$ and $D_{1}:=\emptyset$
 \For {$j=1$ to $N$} 
   \State sample $d_{j}$ from $\chan_{n}$;
     let $D_{f(d_{j})}$ := $D_{f(d_{j})}\cup \{d_{j}\}$
\EndFor
    \State $N_{0}=|D_{0}|$
  \State \algorithmicif\ $|D_{0}\cup D_{1}|< N$ or $N_0/N\not\in [1/3,2/3]$\ \algorithmicthen\
  \AlgReturn{$d_{1},\ldots,d_{N}$} and \textbf{halt} 
  \State choose hash key $k_{\algh}\gets \combine{\algh}{\Gen}(1^{\kappa})$
  \State choose \ac{PRP} keys $k_{\algp},k'_{\algp}\gets \combine{\algp}{\Gen}(1^{\kappa})$
  \State let $h$ := $\combine{\algh}{\Eval}_{k_{\algh}}(\lex(D_{0}\cup
  D_{1}))$  \Comment{compute hash}
  \State call $\combine{\PKES^{\WOR}\!\!}{\setup}(N,N_0)$ \Comment{setup $N, N_0$}
  \State let $b_{1},b_{2},\ldots, b_{L} \gets \combine{\PKES^{\WOR}\!}{\Enc}(\pk,m\mid\mid k_{\algh}\mid\mid
  k_{\algp} \mid\mid k'_{\algp}\mid\mid h)$
\State let $\vec{d} := \algo{generate}(D_{0}\cup
D_{1},f,b_{1},\ldots,b_{L},k_{\algp},k'_{\algp})$ 
\State \AlgReturn{$\vec{d}$}
}{The steganographic encoder}
\subsubsection*{Proofs of Reliabiliy and Security.}
We will first concentrate on the reliability of the system $\PKSteg^{*}$ and prove
that its unreliability is negligible. This is due to the fact, that the
decoding always works and the encoding can only fail if a document was
drawn more than once or if the sampled documents are very imbalanced
with regard to $f$. 

\begin{theorem}
  \label{thm:rel}
  The probability that a message is not correctly embedded by the
  encoder $\combine{\PKSteg^{*}}{\Enc}$ 
  is at most
  $3N^{2}\cdot 2^{-\minent(\chan,\kappa)}+2\exp(-N/54)$.
\end{theorem}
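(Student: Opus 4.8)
The plan is to show that, conditioned on the encoder \emph{not} aborting in its guard line (``\textbf{if} $|D_{0}\cup D_{1}|<N$ or $N_{0}/N\notin[1/3,2/3]$''), decoding recovers $m$ with certainty, so that the only source of failure is the guard triggering. Indeed, $\combine{\PKSteg^{*}}{\Enc}$ produces its output deterministically from $(D,f,\vec{b},k_{\algp},k'_{\algp})$ via $\algo{generate}$, with $f(d_{i})=b_{i}$ for $i\le L$; the embedded ciphertext carries $m,k_{\algh},k_{\algp},k'_{\algp},h$, so $\combine{\PKSteg^{*}}{\Dec}$ reads back $\vec{b}$, inverts $\PKES^{\WOR}$ (assuming its perfect decryption correctness), recomputes the hash and re-runs $\algo{generate}$, and all verifications pass by construction. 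Hence the unreliability equals $\Pr[\text{guard triggers}]\le\Pr[\text{collision}]+\Pr[\text{imbalance}]$ by a union bound, where \emph{collision} is the event that some document is sampled twice (so $|D_{0}\cup D_{1}|<N$) and \emph{imbalance} is $N_{0}/N\notin[1/3,2/3]$.

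For the collision term I would use a birthday bound. Since $\chan$ is memoryless the $N$ samples are i.i.d.\ from $\chan_{n}$, so for $i\neq j$ we have $\Pr[d_{i}=d_{j}]=\sum_{d}\chan_{n}(d)^{2}\le\max_{d}\chan_{n}(d)\le 2^{-\minent(\chan,\kappa)}$; a union bound over the at most $N^{2}/2$ pairs gives $\Pr[\text{collision}]\le\tfrac{N^{2}}{2}\,2^{-\minent(\chan,\kappa)}$.

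The imbalance term is the main work. I would fix the hash function $f$ and write $N_{0}=\sum_{j=1}^{N}X_{j}$ with $X_{j}=\mathbf{1}[f(d_{j})=0]$; for fixed $f$ these are i.i.d.\ Bernoulli with mean $p_{f}=\Pr_{d\gets\chan_{n}}[f(d)=0]$. The subtlety is that $p_{f}$ need not equal $\tfrac12$, so I first control it using that $f\sgets G_{\kappa}$ comes from a strongly $2$-universal family: $\E_{f}[p_{f}]=\tfrac12$, and separating the diagonal $d=d'$ (probability $\tfrac12$) from off-diagonal pairs (probability $\tfrac14$) yields
\[
\operatorname{Var}_{f}(p_{f})\;=\;\tfrac14\sum_{d}\chan_{n}(d)^{2}\;\le\;\tfrac14\,2^{-\minent(\chan,\kappa)} .
\]
I then split on whether $f$ is \emph{good}, meaning $|p_{f}-\tfrac12|<\tau$ for a fixed small margin (e.g.\ $\tau=\tfrac{1}{16}$). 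By Chebyshev, $\Pr_{f}[\text{bad}]\le\operatorname{Var}_{f}(p_{f})/\tau^{2}\le 64\cdot 2^{-\minent(\chan,\kappa)}$. For good $f$ the mean $N p_{f}$ is within $N\tau$ of $N/2$, so the distance from $\E[N_{0}]$ to each threshold $N/3$ and $2N/3$ is at least $N(\tfrac16-\tau)$; Hoeffding's inequality then bounds each tail by $\exp(-2N(\tfrac16-\tau)^{2})$, and with $\tau=\tfrac1{16}$ one checks $2(\tfrac16-\tfrac1{16})^{2}=\tfrac{25}{1152}\ge\tfrac1{54}$, so each tail is at most $\exp(-N/54)$ and their sum is at most $2\exp(-N/54)$.

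Combining, $\Pr[\text{fail}]\le\tfrac{N^{2}}{2}\,2^{-\minent(\chan,\kappa)}+64\cdot 2^{-\minent(\chan,\kappa)}+2\exp(-N/54)$, and since $N=8L\ge 8$ gives $\tfrac{N^{2}}{2}+64\le 3N^{2}$, the first two terms collapse into $3N^{2}\,2^{-\minent(\chan,\kappa)}$, matching the claim. The step I expect to be the main obstacle is precisely the reconciliation of the bias $p_{f}\neq\tfrac12$ with the targeted exponents: one must choose $\tau$ small enough that the additive tail bound on the good event still yields an exponent of at least $1/54$, yet large enough that the Chebyshev ``bad-$f$'' contribution, together with the birthday term, stays within $3N^{2}\,2^{-\minent(\chan,\kappa)}$; the rest of the argument is routine.
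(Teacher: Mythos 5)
Your proposal is correct, and it uses the same overall decomposition as the paper's proof: a union bound over the two guard events (a collision among the $N$ samples, handled by a birthday bound charged to $3N^{2}\cdot 2^{-\minent(\chan,\kappa)}$, and an imbalance of $N_{0}/N$, handled by a concentration bound charged to $2\exp(-N/54)$), plus the observation that once the guard passes, $\algo{generate}$ and the decoder succeed deterministically. The genuine difference is in the concentration step, and there your version is more careful than the paper's. The paper asserts that each sampled document maps to $1$ with probability exactly $1/2$ (over the joint choice of $f$ and the document) and then applies a ``Chernoff-like bound'' to $X=\sum_{i}X_{i}$ as if the indicators were independent with mean $1/2$; but the $X_{i}$ all share the same random hash function $f$, so unconditionally they are only pairwise independent (which is all strong $2$-universality guarantees), while conditionally on $f$ they are i.i.d.\ with mean $p_{f}$ that need not equal $1/2$. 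Your two-stage argument closes exactly this gap: Chebyshev over the choice of $f$, using $\operatorname{Var}_{f}(p_{f})=\frac{1}{4}\sum_{d}\chan_{n}(d)^{2}\leq\frac{1}{4}\cdot 2^{-\minent(\chan,\kappa)}$, shows $f$ is good except with probability $64\cdot 2^{-\minent(\chan,\kappa)}$, and Hoeffding conditioned on a good $f$ gives the exponential tail; the price is an extra min-entropy term, which you correctly absorb into the $3N^{2}\cdot 2^{-\minent(\chan,\kappa)}$ budget (valid since $N=8L\geq 8$). Your constants also verifiably deliver the claimed exponent, $2(\frac{1}{6}-\frac{1}{16})^{2}=\frac{25}{1152}\geq\frac{1}{54}$, whereas the paper's displayed bound $2\exp(-L(1/3)^{2}/3)$ matches $2\exp(-N/54)$ only when read as the standard multiplicative Chernoff bound with mean $N/2$ and relative deviation $1/3$. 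In short: same skeleton and same failure events, but your handling of the hash function's randomness is the rigorous version of the step the paper treats heuristically.
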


If $1<\lambda \leq \log(\kappa)$ bits per document are embedded, this
probability is bounded by $2^{2\lambda}\cdot 3N^{2}\cdot
2^{-\minent(\chan,\kappa)}+2^{\lambda+1}\exp(-N/54)$, which is
negligible in $\kappa$ if $\minent(\chan,\kappa)$ sufficiently large. 
Now,
it only remains to prove that our construction is secure. The proof
proceeds similar to the security proof of Hopper
\cite{hopper2005covertext}. But instead of showing that no other
encoding of a message exists, we prove that finding any other encoding
of the message is infeasible via Lemma~\ref{lem:produce_negl}.

    \begin{figure}
      \begin{minipage}{.48\linewidth}
    \begin{gameproof}[name=$H$,arg={=\chan^{N}_n}]
      \gameprocedure[linenumbering]{
        \pk^{*}=(\pk,f) \gets \combine{\PKSteg^{*}}{\Gen}(1^{\kappa})\\
      \pcfor j := 1,2,\ldots,N:\\
      \t d_{j} \gets \chan_{\dl(\kappa)}\\
      \pcreturn ((d_{1},\ldots,d_{N}),\pk^{*})
    }
\end{gameproof}
      \end{minipage}
      \begin{minipage}{.48\linewidth}
\begin{gameproof}[name=$H$, arg={}, nr=1]
    \gameprocedure{
      \pk^{*}=(\pk,f) \gets \combine{\PKSteg^{*}}{\Gen}(1^{\kappa})\\
      \text{Lines 1 to 4 in
        $\combine{\PKSteg^{*}}{\Enc}$}\\
      \setcounter{pclinenumber}{4}
      \pcln \gamechange{$P\sgets \Perm$}\\
      \pcln  \pcreturn ((\gamechange{$d_{P(1)},\ldots,d_{P(N)}$}),\pk^{*})
    }
  \end{gameproof}
      \end{minipage}\\[.3cm]

  \begin{gameproof}[name=$H$, arg={}, nr=2]
        \gameprocedure{
      \pk^{*}=(\pk,f) \gets \combine{\PKSteg^{*}}{\Gen}(1^{\kappa})\\
      \text{Lines 1 to 4 in
        $\combine{\PKSteg^{*}}{\Enc}$}\\
      \setcounter{pclinenumber}{4}
      \pcln P\sgets \Perm; \gamechange{$P'\sgets \Perm$}; \gamechange{$k_{\algh}\gets \combine{\algh}{\Gen}(1^{\kappa})$}\\
      \pcln \gamechange{$b_{1},b_{2},\ldots, b_{L} \gets
        D^{\WOR}_{(N,N_0,L)}$}\\
      \pcln \pcreturn (\gamechange{$
      \algo{generate}(D_{0}\cup D_{1},f,b_{1},\ldots,b_{L},P,P')$},\pk^{*})\\
      \pccomment{$\algo{generate}(\ldots,P,P')$ uses the permutations $P,P'$}
    }
  \end{gameproof}\\[.3cm]

  \begin{gameproof}[name=$H$, arg={},nr=3]
    \gameprocedure{
      \pk^{*}=(\pk,f) \gets \combine{\PKSteg^{*}}{\Gen}(1^{\kappa})\\
      \text{Lines 1 to 4 in
        $\combine{\PKSteg^{*}}{\Enc}$}\\
      \setcounter{pclinenumber}{4}
      \pcln \gamechange{$k_{\algp} \gets
        \combine{\algp}{\Gen}(1^{\kappa})$}; P'\sgets \Perm; k_{\algh}\gets \combine{\algh}{\Gen}(1^{\kappa})\\
      \pcln b_{1},b_{2},\ldots, b_{L} \sgets D^{\WOR}_{(N,N_0,L)}\\
      \pcln \pcreturn
      (\algo{generate}(D_{0}\cup
      D_{1},f,b_{1},\ldots,b_{L},\gamechange{$k_{\algp}$},P'),\pk^{*})\\
      \pccomment{$\algo{generate}(\ldots,P')$ uses the permutation $P'$}
      }
    \end{gameproof}\\[.3cm]

      \begin{gameproof}[name=$H$, arg={},nr=4]
    \gameprocedure{
      \pk^{*}=(\pk,f) \gets \combine{\PKSteg^{*}}{\Gen}(1^{\kappa})\\
      \text{Lines 1 to 4 in
        $\combine{\PKSteg^{*}}{\Enc}$}\\
      \setcounter{pclinenumber}{4}
      \pcln k_{\algp} \gets
        \combine{\algp}{\Gen}(1^{\kappa}); \gamechange{$k'_{\algp} \gets
        \combine{\algp}{\Gen}(1^{\kappa})$}; k_{\algh}\gets \combine{\algh}{\Gen}(1^{\kappa})\\
      \pcln b_{1},b_{2},\ldots, b_{L} \sgets D^{\WOR}_{(N,N_0,L)}\\
      \pcln \pcreturn (\algo{generate}(D_{0}\cup
      D_{1},f,b_{1},\ldots,b_{L},k_{\algp},\gamechange{$k'_{\algp}$}),\pk^{*})
      }
    \end{gameproof}\\[.3cm]

      \begin{gameproof}[name=$H$, arg={=\combine{\PKSteg^{*}}{\Enc}},nr=5]
        \gameprocedure{
          \pk^{*}=(\pk,f) \gets \combine{\PKSteg^{*}}{\Gen}(1^{\kappa})\\
      \text{Lines 1 to 4 in
        $\combine{\PKSteg^{*}}{\Enc}$}\\
      \setcounter{pclinenumber}{4}
      \pcln k_{\algp} \gets \combine{\algp}{\Gen}(1^{\kappa}); k'_{\algp} \gets
        \combine{\algp}{\Gen}(1^{\kappa}); k_{\algh}\gets
        \combine{\algh}{\Gen}(1^{\kappa})\\
        \pcln \gamechange{$h := \combine{\algh}{\Eval}_{k_{\algh}}(\lex(D_0 \cup
          D_1))$}\\
        \pcln \gamechange{$\combine{\PKES^{\WOR}\!\!}{\setup}(N,N_0)$}\\
      \pcln b_{1},b_{2},\ldots, b_{L} \gets \gamechange{$\combine{\PKES^{\WOR}}{\Enc}(\pk,m\mid\mid k_{\algh}\mid\mid
  k_{\algp}\mid\mid k'_{\algp} \mid\mid h)$}\\
      \pcln \pcreturn (\algo{generate}(D_{0}\cup
      D_{1},f,b_{1},\ldots,b_{L},k_{\algp},k'_{\algp}),\pk^{*})
      }
  \end{gameproof}
      \setlength\overfullrule{8mm}
    \caption[An overview of hybrids $H_{1}$ and $H_{6}$ used in the proof of 
    Theorem~\ref{thm:sec}]{An overview of hybrids $H_{1}$
      and $H_{6}$  used in the proof of Theorem~\ref{thm:sec}.\\
      Changes between the  hybrids are marked as shadowed.}
    \label{fig:hybrids}
  \end{figure}

\begin{theorem}
  \label{thm:sec}
 Let $\chan$ be a memoryless channel, $\algp$ be a \ac{PRP} 
relative to $\chan$, the algorithm $\algh$ be a \ac{CRHF}
relative to $\chan$, the cryptosystem $\PKES^{\WOR}$ be the  
cryptosystem designed in Section~\ref{sec:bias} with very sparse support 
relative to $\chan$, and $\mathcal{G}$ be a strongly $2$-universal hash family. 
  The stegosystem $\PKSteg^{*}$ is
  \ac{SS-CCA}-secure against \emph{every} memoryless channel.
\end{theorem}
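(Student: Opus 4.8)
The plan is a game-hopping argument over the six hybrids $H_0,\dots,H_5$ of Figure~\ref{fig:hybrids}. The challenge handed to the warden for the secret bit $b=1$ is exactly $H_0=\chan^{N}_{n}$ (on a memoryless channel $\chan_{\hist^{*},n}^{\ell}$ is a block of $N=\ell$ independent covertexts), while the challenge for $b=0$ coincides with $H_5=\combine{\PKSteg^{*}}{\Enc}$ except on the early-abort event of lines~5--6 of $\combine{\PKSteg^{*}}{\Enc}$ -- a repeated document or an $f$-unbalanced sample -- which by Theorem~\ref{thm:rel} has negligible probability. Writing $q_i$ for the probability that $\Ward$ outputs $1$ on a challenge distributed as $H_i$ while retaining its \ac{SS-CCA} decoding oracle, one has $\adv^{\sscca}_{\Ward,\PKSteg^{*},\chan}(\kappa)=\tfrac12\,|q_0-q_5|\le\tfrac12\sum_{i=0}^{4}|q_i-q_{i+1}|$ up to the negligible abort term, so it suffices to bound each of the five consecutive gaps. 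Throughout, the oracle $\Dec_{\sk,\vec{d}^{*}}$ is kept fixed; the one delicate point, deferred to the end, is its simulation inside the computational reductions.

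The two outer steps are information-theoretic and cost nothing. For $H_0\equiv H_1$ I use that on a memoryless channel the $N$ sampled documents are i.i.d., so reordering them by a uniformly random permutation preserves the joint distribution. The step $H_1\equiv H_2$ is precisely Lemma~\ref{lem:infotheoretic}: running $\algo{generate}$ on two independent uniform permutations together with $\vec{b}\sim D^{\WOR}_{(N,N_0,L)}$ reproduces a uniformly random permutation of $D$, which is what $H_1$ outputs. In both steps the two experiments induce literally identical distributions on $(\vec{d}^{*},\pk^{*})$, hence the decoding oracle answers identically and $q_0=q_1=q_2$.

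The three inner steps are computational. For $H_2\to H_3$ and $H_3\to H_4$ I would reduce to the pseudorandomness of $\algp$, replacing first $P$ by $\combine{\algp}{\Eval}_{k_{\algp}}$ and then $P'$ by $\combine{\algp}{\Eval}_{k'_{\algp}}$; here the reduction runs $\combine{\PKSteg^{*}}{\Gen}$ itself, hence knows $\sk^{*}$ and answers \emph{every} decoding query by executing the honest decoder, evaluating on its own the \ac{PRP} keys encrypted \emph{inside} the queried ciphertext and accessing only the single challenge permutation through its oracle. For $H_4\to H_5$ I would reduce to the \ac{CCA+}-indistinguishability of $\PKES^{\WOR}$ from $D^{\WOR}$ (Corollary~\ref{cor:indist:cca:perm}): the reduction samples $D$, assembles the plaintext $m^{*}\,\|\,k_{\algh}\,\|\,k_{\algp}\,\|\,k'_{\algp}\,\|\,h$, submits it to its challenge oracle, plants the answer as the prefix $\vec{b}^{*}=b_1\dots b_L$, and answers a query $\vec{d}'$ by reading off $\vec{b}'=f(d'_1)\dots f(d'_L)$ and forwarding $\vec{b}'$ to its $\PKES^{\WOR}$-decryption oracle before running the hash- and ordering-checks.

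The main obstacle, and the only place where full \ac{SS-CCA} (rather than \ac{SS-RCCA}) security is really exploited, arises in this last reduction when $\vec{b}'=\vec{b}^{*}$ but $\vec{d}'\neq\vec{d}^{*}$: the decryption oracle is forbidden on the challenge ciphertext $\vec{b}^{*}$, so the reduction must answer such a query unaided, and my plan is to show that answering $\bot$ is correct except with negligible probability. In the $H_5$ branch $\vec{b}^{*}$ decrypts to genuine keys and $\vec{d}^{*}=\algo{generate}(D,f,\vec{b}^{*},k_{\algp},k'_{\algp})$; for the honest decoder to accept a different $\vec{d}'$ it would need $\vec{d}'=\algo{generate}(D',f,\vec{b}^{*},k_{\algp},k'_{\algp})$ together with $\combine{\algh}{\Eval}_{k_{\algh}}(\lex(D'))=h$, which by determinism of $\algo{generate}$ forces $D'\neq D$ and hence a genuine $\algh$-collision -- exactly the success event of an attacker on $\algo{generate}$, whose probability Lemma~\ref{lem:produce_negl} bounds by $\Adv^{\hash}_{\Fi,\algh,\chan}(\kappa)$. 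In the $H_4$ branch the very sparse support of $\PKES^{\WOR}$ guarantees that the planted $D^{\WOR}$-string $\vec{b}^{*}$ is an invalid ciphertext except with negligible probability, so the honest decoder returns $\bot$ on every prefix-colliding query anyway and the $\bot$-answering is consistent across the two branches. Collecting the five gaps with this collision-resistance term and the abort term yields $\adv^{\sscca}_{\Ward,\PKSteg^{*},\chan}(\kappa)\le\negl(\kappa)$ for an arbitrary memoryless $\chan$, which is the assertion.
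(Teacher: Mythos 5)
Your proposal is correct and takes essentially the same route as the paper's own proof: the same six hybrids of Figure~\ref{fig:hybrids}, the same two information-theoretic steps (permutation invariance on a memoryless channel and Lemma~\ref{lem:infotheoretic}), the same two \ac{PRP} reductions in which the reduction holds $\sk$ and answers decoding queries itself, and the same final reduction to $\PKES^{\WOR}$ where prefix-colliding decoding queries are answered with $\bot$, justified by the sparse support in the random-bits branch and by Lemma~\ref{lem:produce_negl} in the real-encryption branch. The only cosmetic difference is that you absorb the early-abort event of the encoder as a separate negligible term via Theorem~\ref{thm:rel}, whereas the paper observes that all hybrids coincide exactly on that event.
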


\begin{proof}[Proof sketch]
  We prove that the above construction is secure via a \emph{hybrid
    argument}. We thus define  six distributions
  $H_{1},\ldots,H_{6}$ shown in Figure~\ref{fig:hybrids}.

  We now proceed by proving that $H_{i}$ and $H_{i+1}$ are
  \ac{SS-CCA}-indistinguishable (denoted by $H_{i}\sim H_{i+1}$). Informally,
  this means that we replace in $\SSCCA-Dist$ the call to the stegosystem (if
  $b=0$) by $H_i$ and the call to the channel (if $b=1$) by $H_{i+1}$. 
 Denote by  $\Adv^{(i)}_{\Ward}(\kappa)$ the advantage of a warden $\Ward$ 
in this situation. 
Clearly, 
  the \ac{SS-CCA}-advantage of $W$ is bounded as $\Adv^{\sscca}_{\Ward,\PKSteg^{*},\chan}(\kappa)\leq
  \Adv^{(1)}_{\Ward}(\kappa)+\Adv^{(2)}_{\Ward}(\kappa)+\Adv^{(3)}_{\Ward}(\kappa)+\Adv^{(4)}_{\Ward}(\kappa)+\Adv^{(5)}_{\Ward}(\kappa)$. This implies the  theorem, as $H_{1}$ simply describes the channel and $H_{6}$ describes  the stegosystem. 
  Infor\-mal\-ly, we argue that:
  \begin{enumerate}
  \item  $H_{1}\sim H_{2}$ because a uniform random permutation on
    a memoryless channel does not change any probabilities;
  \item $H_{2}\sim H_{3}$  because our choice of $b_{1},\ldots,b_{L}$
    and random  permutations  equal the channel by
    Lemma~\ref{lem:infotheoretic};
\item  $H_{3}\sim H_{4}$ because $\algp$ is a \ac{PRP};
\item  $H_{4}\sim H_{5}$ because $\algp$ is a \ac{PRP}; 
\item $H_{5}\sim H_{6}$ because
  $\PKES^{\WOR}$ is secure due to Corollary~\ref{cor:indist:cca:perm} and
  because of Lemma~\ref{lem:produce_negl}. \qed
  \end{enumerate}

\end{proof}

\section{An Impossibility Result}
\label{sec:impossibility}
We first describe an argument for truly random channels using an
infeasible assumption and then proceed to modify those channels to
get rid of this. All channels will be
$0$-memoryless and we thus write $\chan_{\eta,\dl}$ 
instead of $\chan_{\hist,\dl}$, if $\hist$ contains $\eta$ document.

The main idea of our construction lies on the \emph{unpredictability} of
random channels. If $\chan_{\eta}$ and $\chan_{\eta+1}$ are independent and
sufficiently random, we can not deduce anything about $\chan_{\eta+1}$
before we have sampling access to it, which we only have \emph{after} we sent
the document of $\chan_{\eta}$ in the standard non-look-ahead model. To
be reliable, there must be enough
documents in $\chan_{\eta+1}$ continuing the already
sent documents (call those documents \emph{suitable}). 
To be \ac{SS-CCA}-secure, the number of suitable documents
in $\chan_{\eta+1}$ must be very small to prevent replay attacks like
those in Section \ref{sec:detect}. By replacing the random channels with
pseudorandom ones, we
can thus prove that \emph{every} stegosystem is either unreliable or
\ac{SS-CCA}-insecure on one of those channels. To improve the
readability, fix some stegosystem $\Steg$ and let 
$n=\combine{\Steg}{\dl}(\kappa)$ and
$\ell=\combine{\Steg}{\outl}(\kappa)$.

\vspace*{-2mm}
\subsubsection*{Lower Bound on Truly Random Channels.}
\label{sec:lower-bound-totally}
For $n\in \mathbb{N}$, we denote by $\mathcal{R}_{n}$ all subsets
$R$ of
$\{0,1\}^{n}$ such that there is a negligible function $\negl$ with
\begin{itemize}
  \item $|R|\geq \negl(n)^{-1}$ and 
  \item $|R|\leq 2^{n/2}$.
  \end{itemize}
This means each subset $R$ has super-polynomial cardinality in $n$ without
being too large.
For an infinite sequence
$\vec{R}=R_{0},R_{1},\ldots$ with $R_{i}\in \mathcal{R}_{n}$, we
construct a channel $\chan(\vec{R})$ where the distribution
$\chan(\vec{R})_{i,n}$ is the uniform distribution on
$R_{i}$. The family of all such channels is denoted by
$\cF(\mathcal{R}_{n})$. We assume that a warden can test whether
a document $d$ belongs to the support of
$\chan(\vec{R})_{i,n}$ and denote this warden by $\Ward_{\vec{R}}$. In the next section, we replace the
totally random channels by pseudorandom ones and will get rid 
of this infeasible assumption.
For a stegosystem $\Steg$~--~like the system $\PKSteg^{*}$ from
the last section~--~ we are now interested in
two possible events that may occur during the run of $\combine{\Steg}{\Enc}$ on a channel
$\chan(\vec{R})$. The first  event, denoted by $\event_{\Nq}$ (for
\emph{\underline{N}on\underline{q}ueried}), happens if $\combine{\Steg}{\Enc}$ outputs a
document it has not seen due to sampling. We are also interested in the
case that $\combine{\Steg}{\Enc}$ outputs something  
in the support of the
channel, denoted by $\event_{\Ins}$ for \emph{\underline{In}
  \underline{S}upport}.  Clearly, upon random choice of
$\vec{R}$, $\eta$ (the length of the history), $m$ and $\pk$
we have
\begin{align*}
  &\Pr[\event_{\Ins} \mid \event_{\Nq}]  \  \leq\ell\cdot
\frac{2^{n/2}-\combine{\Steg}{\query}(\kappa)}{2^{n}-
  \combine{\Steg}{\query}(\kappa)} \  \leq 
    \ell\cdot 2^{-n/2},
\end{align*}
where   $\combine{\Steg}{\query}(\kappa)$ denotes the number of
queries performed by $\Steg$. 
This is negligible in $\kappa$ as $n$, $\query$ and $\ell$ are polynomials in
$\kappa$. As warden $\Ward_{\vec{R}}$ can 
test
whether a document belongs to the random sets, we have
$\Adv^{\sscca}_{\Ward_{\vec{R}},\Steg,\chan(\vec{R})}(\kappa)\geq
\Pr[\overline{\event_{\Ins}}]$.  Clearly, since we can assume $\overline{\event_{\Ins}}\subseteq \event_{\Nq}$ 
we thus obtain 
$$  \Pr[\event_{\Nq}] \  = \  \frac{\Pr[\overline{\event_{\Ins}} \wedge \event_{\Nq}]}{\Pr[\overline{\event_{\Ins}} \mid \event_{\Nq}]} \  \leq \
  \frac{\Adv^{\sscca}_{\Ward_{\vec{R}},\Steg,\chan(\vec{R})}(\kappa)}{1-\ell\cdot
  2^{-n/2}}.$$
Hence, if $\Steg$ is \ac{SS-CCA}-secure, the term $\Pr[\event_{\Nq}]$ must be negligible.

If $\Steg$ is given a history of length $\eta$ and it outputs 
documents
$d_{1},\ldots,d_{\ell}$, we note that $\combine{\Steg}{\Enc}$ only gets sampling
access to $\chan(\vec{R})_{\eta+\ell-1,n}$ after
it sent $d_{1},\ldots,d_{\ell-1}$ in the standard non-look-ahead model. Clearly, due to the random choice of
$\vec{R}$, the set $R_{\eta+\ell}$ is independent of
$R_{\eta},R_{\eta+1},\ldots,R_{\eta+\ell-1}$. The encoder
$\combine{\Steg}{\Enc}$ thus needs to decide on the documents $d_{1},\ldots,d_{\ell-1}$
without any knowledge of $R_{\eta+\ell}$. As $\combine{\Steg}{\Enc}$ draws a sample
set $D$ from $\chan(\vec{R})_{\eta+\ell-1,n}$
with at most $q=\combine{\Steg}{\query}(\kappa)$ documents, we now look at the event $\event_{\Nsui}$
(for \emph{\underline{N}ot \underline{sui}table}) that none of the
documents in $D$ are suitable for the encoding,
\ie if the sequence $d_{1},d_{2},\ldots,d_{\ell-1},d$ is not a suitable
encoding of the message $m$ for all $d\in D$. Denote the unreliabiliy of the stegosystem by
$\rho$. Clearly, if $\event_{\Nsui}$ occurs, there are two possibilities for the
stegosystem: It  either outputs something from $D$ and thus increases the
unreliability or it  outputs something it has not queried. We thus have
  $\Pr[\event_{\Nsui}] \  \leq \  \max\{\rho, (1-\rho)\cdot \Pr[\event_{\Nq}]\}$.
Note that  $\rho$ must be negligible if $\combine{\Steg}{\Enc}$ is reliable and,
as discussed above, the term $\Pr[\event_{\Nq}]$ (and thus the term $(1-\rho)\cdot
\Pr[\event_{\Nq}]$) must be negligible if $\combine{\Steg}{\Enc}$ is \ac{SS-CCA}-secure. Hence, if $\combine{\Steg}{\Enc}$ is
\ac{SS-CCA}-secure and reliable, the probability $\Pr[\event_{\Nsui}]$ must be
negligible. 
The insight, that $\Pr[\event_{\Nsui}]$ must be negligible directly leads us to
the construction of a warden $\Ward_{\vec{R}}$ on the channel
$\chan(\vec{R})$. The warden chooses a random history
of length $\eta$ and a random message $m$ and sends those to its challenging
oracle. It then receives the document sequence
$d_{1},\ldots,d_{\ell}$. If $d_{i}\not\in R_{\eta+i}$, the warden
returns »Stego«. Else, it samples $q$ documents $D$ from
$\chan(\vec{R})_{\eta+\ell,n}$ and tests for all $d\in D$ via
the decoding oracle $\combine{\Steg}{\Dec}_{\sk}$ if the sequence $d_{1},d_{2},\ldots,d_{\ell-1},d$ decodes
to $m$. If we find such a $d$,  return »Stego« and else
return »Not Stego«. If the documents are randomly chosen from
the channel, the probability to return »Stego« is at most
$q/|2^{\combine{\Steg}{\ml}(\kappa)}|$, \ie negligible. If the documents
are chosen by the stegosystem, the probability of
»Not Stego« is exactly $\Pr[\event_{\Nsui}]$. Hence, $\Steg$ must be
either unreliable or \ac{SS-CCA}-insecure on some channel in $\cF(\mathcal{R}_{n})$. 

\vspace*{-1mm}
\subsubsection*{Lower Bound on Pseudorandom Channels.}
To give a 
proof, we will replace the random channels
$\chan(\vec{R})$ by pseudorandom ones.
The construction assumes existence of a
\ac{CCA+}-secure cryptosystem $\PKES$ with
$\combine{\PKES}{\cl}(\kappa)\geq 2\combine{\PKES}{\ml}(\kappa)$.

For $\omega=(\pk,\sk)\in
\supp(\combine{\PKES}{\Gen}(1^{\kappa}))$, let $\chan(\omega)_{i,\dl(\kappa)}$ be the
distribution $\combine{\PKES}{\Enc}(\pk,\binary(i)_{\dl(\kappa)})$,
where $\binary(i)_{\dl(\kappa)}$ is the
binary representation of the number $i$ of length exactly $\dl(\kappa)$ modulo $2^{\dl(\kappa)}$. The
family of channels $\bm{\mathcal{C}}_{\PKES}=\{\chan(\omega)\}_{\omega}$ thus has the following
properties:

\begin{enumerate}
\item There is a negligible function $\negl$ such that for each $\omega$
  and each $i$, we have
  $2^{\combine{\PKES}{\ml}(\kappa)/2}\geq
  |\chan(\omega)_{i,\dl(\kappa)}|\geq \negl(\kappa)^{-1}$ if $\PKES$ is
  \ac{CCA+}-secure. This follows easily from the \ac{CCA+}-security of
  $\PKES$: If $|\chan(\omega)_{i,\dl(\kappa)}|$ would be polynomial, an
  attacker could simply store all corresponding ciphertexts. 
\item An algorithm with the knowledge of $\omega$ can test in polynomial
  time, whether $d\in \supp(\chan(\omega)_{i,\dl(\kappa)})$, \ie whether $d$
  belongs to the support by simply testing whether
  $\combine{\PKES}{\Dec}(\sk,d)$ equals $\binary(i)_{\dl(\kappa)}$. 
\item Every algorithm $\algo{Q}$ that tries to distinguish $\chan(\omega)$ from a
  random channel
  $\chan(\vec{R})$ fails: For every polynomial algorithm $\algo{Q}$, we
  have that the term
  \begin{align*}
    \bigl|
    &\Pr_{\vec{R}\sgets \bm{\mathcal{R}}_{\dl(\kappa)}^{*}}[\algo{Q}^{\chan(\vec{R})}(1^{\kappa})=1]-
    \Pr_{\omega\gets \combine{\PKES}{\Gen}(1^{\kappa})}[\algo{Q}^{\chan(\omega)}(1^{\kappa})=1]\bigl|
  \end{align*}
  is negligible in $\kappa$ if $\PKES$ is \ac{CCA+}-secure. This follows
  from the fact that no polynomial algorithm can distinguish
  $\chan(\vec{R})$ upon random choice of $\vec{R}$ from the uniform
  distribution on $\{0,1\}^{n}$, as $|\chan(\vec{R})_{i,n}|$ is
  super-polynomial. Furthermore,  an attacker
  $\Att$ on $\PKES$ can simulate $\algo{Q}$ for a
  successful attack. 
\end{enumerate}

Note that the third property directly implies that no polynomial
algorithm can conclude anything about $\chan(\omega)_{i,\dl(\kappa)}$
from samples of previous distributions
$\chan(\omega)_{0,\dl(\kappa)}, \ldots
,\chan(\omega)_{i-1,\dl(\kappa)}$, except for a negligible term. The
second property directly implies that we can get rid of the infeasible
assumption of the previous section concerning the ability of the warden
to test whether a document belongs to the support of $\chan(\omega)$: We
simply equip the warden with the seed $\omega$. Call the resulting
warden $\Ward_{\omega}$. Note that this results in a non-uniform warden.
As above, we are interested in the events that a stegosystem outputs a
document that it has not seen
($\event_{\widehat{\Nq}}$), 
that a document is outputted which 
does not belong to the support ($\event_{\widehat{\Ins}}$) and
the event that a random set of $q$ documents is not suitable to
complete a given document prefix
$d_{1},d_{2},\ldots,d_{\ell-1}$
($\event_{\widehat{\Nsui}}$).

As $\event_{\widehat{\Ins}}$ is a polynomially testable property (due to the
second property of our construction), we can conclude a similar bound as above:
\begin{lemma}
\label{lem:dedic}
  Let $\Steg$ be an \ac{SS-CCA}-secure universal stegosystem. For every
  warden $\Ward$ and every \acs{CCA+}-attacker $\Att$,\  
$\Pr[\event_{\widehat{\Nq}}] \  \leq \   \frac{\Adv^{\sscca}_{\Ward,\Steg,\chan(\omega)}(\kappa)}{1-\ell\cdot
  2^{-n/2}}+\Adv^{\pkess}_{\Att,\PKES}(\kappa)$.
\end{lemma}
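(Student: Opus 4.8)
The plan is to transport the truly-random-channel argument of the previous subsection almost verbatim to the pseudorandom channel $\chan(\omega)$, using the non-uniform warden $\Ward_\omega$ hard-wired with the seed $\omega$. By the second property of $\bm{\mathcal{C}}_{\PKES}$, this warden can decide membership in $\supp(\chan(\omega)_{i,n})$ by checking whether $\combine{\PKES}{\Dec}(\sk,d)=\binary(i)_{n}$, which is exactly the capability that makes the out-of-support detection feasible without the infeasible assumption of the random case. As before I work with two events of a single run of $\combine{\Steg}{\Enc}$ on $\chan(\omega)$: that it emits a non-sampled document ($\event_{\widehat{\Nq}}$) and that it emits a document in the support ($\event_{\widehat{\Ins}}$, in analogy with $\event_{\Ins}$). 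The elementary containment $\overline{\event_{\widehat{\Ins}}}\subseteq \event_{\widehat{\Nq}}$ holds since every sampled document trivially lies in the support, so I obtain the exact identity $\Pr[\event_{\widehat{\Nq}}]=\frac{\Pr[\overline{\event_{\widehat{\Ins}}}]}{\Pr[\overline{\event_{\widehat{\Ins}}}\mid \event_{\widehat{\Nq}}]}$.

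For the numerator I argue as on random channels: on a genuine draw from $\chan(\omega)$ every emitted document is in the support, so $\Ward_\omega$ reports ``Stego'' precisely when it observes an out-of-support document, giving $\Pr[\overline{\event_{\widehat{\Ins}}}]\le \Adv^{\sscca}_{\Ward_\omega,\Steg,\chan(\omega)}(\kappa)$. It then remains to lower-bound the denominator by $1-\ell\cdot 2^{-n/2}$, that is, to show that conditioned on emitting some non-sampled document the probability that it lands in the support is at most $\ell\cdot 2^{-n/2}$. Substituting these two bounds into the identity yields the claimed inequality, with the $\Adv^{\pkess}_{\Att,\PKES}(\kappa)$ term absorbing the error incurred when establishing the denominator bound in the pseudorandom setting.

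The main obstacle is exactly this last step. On a truly random channel the estimate $\Pr[\event_{\Ins}\mid \event_{\Nq}]\le \ell\cdot 2^{-n/2}$ is pure counting: the support $R_{\eta+i}$ is a uniformly random set of size at most $2^{n/2}$ that the encoder never queried, so each of the $\le\ell$ emitted documents hits it with probability at most $2^{-n/2}$. For $\chan(\omega)$ the support is a fixed, efficiently testable set, so the counting argument is unavailable and must be replaced by a reduction. I would show that an encoder producing a non-sampled document inside $\supp(\chan(\omega)_{i,n})$ with probability noticeably above $\ell\cdot 2^{-n/2}$ yields, through the third property of $\bm{\mathcal{C}}_{\PKES}$, a distinguisher between $\chan(\omega)$ and a random $\chan(\vec{R})$, and ultimately an \acs{CCA+}-attacker $\Att$ on $\PKES$: the attacker runs $\combine{\Steg}{\Enc}$ while simulating $\chan(\omega)$ for it, and uses its decryption oracle to test whether the emitted non-sampled document decrypts to the relevant $\binary(\cdot)_{n}$. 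The delicate point is that this decryption query must differ from the planted challenge ciphertext, and this is guaranteed precisely because we condition on $\event_{\widehat{\Nq}}$: the tested document was never sampled, hence is distinct from the challenge that was planted among the samples, so the query is admissible. The resulting probability gap upper-bounds $\Adv^{\pkess}_{\Att,\PKES}(\kappa)$, which completes the argument.
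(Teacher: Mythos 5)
Your proposal is correct and follows essentially the same route as the paper, whose own treatment of this lemma is only a sketch (``we can conclude a similar bound as above''): the identical decomposition of $\Pr[\event_{\widehat{\Nq}}]$ using $\overline{\event_{\widehat{\Ins}}}\subseteq\event_{\widehat{\Nq}}$, the identical numerator bound via the seed-equipped non-uniform warden $\Ward_{\omega}$ (the second property of $\bm{\mathcal{C}}_{\PKES}$), and the identical use of the third property to import the $\ell\cdot 2^{-n/2}$ counting bound from the truly random channels at the cost of the $\Adv^{\pkess}_{\Att,\PKES}(\kappa)$ term. Your reduction sketch for that last step (planting the challenge among the samples and observing that conditioning on $\event_{\widehat{\Nq}}$ keeps the decryption queries admissible) is actually more explicit than anything the paper provides, and the only cosmetic deviation is that your error term sits inside the denominator rather than additively outside the fraction, which matches the paper's stated form up to trivial algebra since all quantities involved are negligible.
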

Hence, if  the stegosystem $\Steg$ is \ac{SS-CCA}-secure and $\PKES$
is \acs{CCA+}-secure, the term $\Pr[\event_{\widehat{\Nq}}]$ must be negligible. 
As above,  we can conclude that
  $\Pr[\event_{\widehat{\Nsui}}] \ \leq \ \max\{\rho, (1-\rho)\cdot
  \Pr[\event_{\widehat{\Nq}}]\}$ for unreliabiliy $\rho$.
The warden $\Ward_{\omega}$ similar to
$\Ward_{\vec{R}}$ from the preceding section thus suceeds with
very high probability. Hence, no \ac{SS-CCA}-secure and reliable stegosystem can
exist for the family  $\bm{\mathcal{C}}_{\PKES}$:

\begin{theorem}\label{thm:impossibility}
  If doubly-enhanced trapdoor permutations exist, for every stegosystem $\Steg$ in the
  non-look-ahead model there is a $0$-memoryless channel $\chan$ such
  that $\Steg$ is either unreliable or it is not \ac{SS-CCA}-secure
  on~$\chan$ against non-uniform wardens.
\end{theorem}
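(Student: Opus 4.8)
The plan is to invoke the hypothesis directly: doubly-enhanced trapdoor permutations yield, by the cited constructions, a \ac{CCA+}-secure cryptosystem $\PKES$ with $\combine{\PKES}{\cl}(\kappa)\ge 2\combine{\PKES}{\ml}(\kappa)$, and I would work entirely inside the pseudorandom $0$-memoryless family $\bm{\mathcal{C}}_{\PKES}=\{\chan(\omega)\}_\omega$ built above. Fix an arbitrary $\Steg$ in the non-look-ahead model. I would prove the dichotomy by contraposition: assume $\Steg$ is reliable (unreliability $\rho$ negligible) on \emph{every} channel of the family, and then exhibit, for a random seed $\omega$, the non-uniform warden $\Ward_\omega$ (equipped with $\omega$) sketched above and show that its $\sscca$-advantage is non-negligible on average. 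An averaging step over $\omega$ then extracts a single seed, hence a concrete $0$-memoryless channel $\chan(\omega)$, on which $\Steg$ fails to be secure.

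The conceptual heart, which I would make precise first on the truly random channels $\chan(\vec{R})$, is the tension forced by the non-look-ahead restriction. Since the encoder must commit to $d_1,\ldots,d_{\ell-1}$ before it obtains sampling access to the last distribution, and since (by independence of the $R_i$) that distribution is revealed only afterwards, the prefix is chosen blind to the set from which $d_\ell$ is drawn. I would fix the suitable set $S$ of documents $d$ with $\combine{\Steg}{\Dec}_{\sk}(d_1,\ldots,d_{\ell-1},d)=m$ and record two symmetric facts: reliability forces a sample of $q=\combine{\Steg}{\query}(\kappa)$ documents from that distribution to meet $S$ except with negligible probability (otherwise the encoder either errs or must output a non-queried document), i.e.\ $\Pr[\event_{\Nsui}]$ is small; and a warden drawing its \emph{own} independent $q$ documents from the \emph{same} distribution meets $S$ with probability exactly $1-\Pr[\event_{\Nsui}]$, producing a replay $d_1,\ldots,d_{\ell-1},d$ with $d\neq d_\ell$ that the decoding oracle accepts. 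The two displayed chains — $\Pr[\event_{\Nsui}]\le\max\{\rho,(1-\rho)\Pr[\event_{\Nq}]\}$ together with $\Pr[\event_{\Nq}]$ being controlled through $\Pr[\event_{\Ins}\mid\event_{\Nq}]\le\ell\cdot 2^{-n/2}$ — force $\Pr[\event_{\Nsui}]$ to be negligible whenever $\Steg$ is reliable and secure, which the warden's success then contradicts.

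I expect the main obstacle to be discharging the infeasible assumption that the warden can test support membership, i.e.\ transferring every bound from $\chan(\vec{R})$ to $\chan(\omega)$. Here I would lean on the three stated properties of $\bm{\mathcal{C}}_{\PKES}$: property~(2) lets me hand the seed $\omega$ to the warden so it can decide $d\in\supp(\chan(\omega)_{i})$ by trial decryption, at the sole cost of non-uniformity; and property~(3), indistinguishability of $\chan(\omega)$ from a truly random channel, lets me carry the probabilities of the (polynomially testable) events $\event_{\widehat{\Ins}},\event_{\widehat{\Nq}},\event_{\widehat{\Nsui}}$ across with only a negligible $\Adv^{\pkess}_{\Att,\PKES}(\kappa)$ loss, exactly as recorded in Lemma~\ref{lem:dedic}. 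The care required is to check that the distinguisher $\algo{Q}$ (equivalently, the attacker $\Att$ simulating the whole experiment) stays polynomial and that each event it estimates is genuinely efficiently checkable given $\omega$.

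Finally I would assemble the advantage. Conditioned on $b=1$ the genuine documents lie in the support and, as $m$ was chosen at random, a union bound shows a random completion decodes to $m$ with probability at most $q\cdot 2^{-\combine{\Steg}{\ml}(\kappa)}$, so $\Ward_\omega$ answers »Not Stego« except negligibly often. Conditioned on $b=0$ it answers »Stego« either because the output left the support or because its sample produced an accepted replay, i.e.\ except with probability $\Pr[\event_{\widehat{\Nsui}}]+\negl(\kappa)$, the $\negl$ absorbing the chance that every suitable sampled document coincides with $d_\ell$ (negligible since $|S|$ and the support are super-polynomial). Hence $\Adv^{\sscca}_{\Ward_\omega,\Steg,\chan(\omega)}(\kappa)\ge \tfrac12-\tfrac12\Pr[\event_{\widehat{\Nsui}}]-\negl(\kappa)$. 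Were $\Steg$ reliable and \ac{SS-CCA}-secure on all of $\bm{\mathcal{C}}_{\PKES}$, this would force $\Pr[\event_{\widehat{\Nsui}}]$ to be simultaneously negligible and $1-\negl(\kappa)$ — a contradiction — so some seed $\omega$ gives a $0$-memoryless channel on which $\Steg$ is either unreliable or not \ac{SS-CCA}-secure against the non-uniform $\Ward_\omega$.
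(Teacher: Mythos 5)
Your proposal is correct and follows essentially the same route as the paper: the same pseudorandom channel family $\bm{\mathcal{C}}_{\PKES}$, the same events $\event_{\Nq}$, $\event_{\Ins}$, $\event_{\Nsui}$ (and their hatted analogues) with the same chain of inequalities, the reduction to support-testability via the seed $\omega$ and Lemma~\ref{lem:dedic}, and the same replay-based warden. The details you add beyond the paper's sketch -- the explicit contraposition and averaging over $\omega$, the treatment of the case where a sampled suitable document coincides with $d_{\ell}$, and the explicit assembly of the advantage bound -- are faithful elaborations rather than a different argument.
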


\section{Discusssion} 
The work of Dedi{\'c} et al.~\cite{dedic2009upper} shows that provable secure
universal steganography needs a huge number of sample documents to embed long
secret messages as high bandwidth stegosystems are
needed for such messages. This limitation also transfers to the public-key
scenario. However, such a limitation does not necessarily restrict applicability
of steganography, especially in case of specific communication channels or
if the length of secret messages is sufficiently short.

A prominent recent example for such applications is the use of steganography for channels
determined by cryptographic primitives, like \acp{SES} or digital signature
schemes. Bellare, Paterson, and Rogaway introduced in \cite{bellare2014asa} so
called \emph{algorithm substitution attacks} against \acp{SES}, where an
attacker replaces an honest implementation of the encryption algorithm by a
modified version which allows to extract the secret key from the ciphertexts
produced by the corrupted implementation. Several follow-up works have been done
based on this paper, such as those by Bellare, Jaeger, and Kane
\cite{bellare2015asa}, Ateniese, Magri, and Ventur \cite{ateniese2015sig}, or
Degabriele, Farshim, and Poettering \cite{degabriele2015asa}. These works
strengthened the model proposed in \cite{bellare2014asa} and presented new
attacks against \acp{SES} or against other cryptographic primitives, \eg against signature schemes. 
Surprisingly, as we show in 
\cite{berndt2017asa}, all such algorithm
substitution attacks can be analyzed in the framework of computational
secret-key steganography and in consequence, the attackers can be identified as stegosystems 
on certain channels determined by the primitives.
In such scenarios, the secret message embedded by the stegosystem corresponds
to  a secret key of the cryptographic algorithm.

A similar approach was used by Pasquini, Schöttle, and Böhme
\cite{pasquini2017decoy} to show that so called \emph{password decoy vaults}
used for example by Chatterjee, Bonneau, Juels, and Ristenpart
\cite{chatterjee2015decoy} and Golla, Beuscher, and Dürmuth
\cite{golla2016decoy} can also be interpreted as steganographic protocols.

\bibliography{lib}

\appendix

\section{Remaining Proofs}
\label{app:proofs:att:generate}

To improve the readability, we will abbreviate some terms and define
$n=\combine{\PKSteg^{*}}{\dl}(\kappa)$,
$\ell=\combine{\PKSteg^{*}}{\outl}(\kappa)$ and
$L=\combine{\PKES^{\WOR}\!}{\cl}(\kappa)$, where $\PKSteg^{*}$ is our
stegosystem constructed in Section~\ref{sec:stegosystem} and
$\PKES^{\WOR}$ is the public-key cryptosystem constructed in
Section~\ref{sec:bias}. We also define $N=8L$.

\subsection{Formal Statement of  Lemma~\ref{lem:produce_negl} and its Proof}
We start with a formal definition for ``$\algo{A}$ is successful on
$D,f,b_{1},\ldots,b_{L},k_{\algh}$''.
\begin{definition}
  \label{def:attack}
  An attacker $\algo{A}$ on $\algo{generate}$ is a \ac{PPTM}, that receives the following input: 
  \begin{compactitem}
  \item a sequence $d_{1},\ldots,d_{N}$ of $N$ pairwise different documents
  \item a hash function $f:\Sigma^n \to \{0,1\}$ from the family
    $\mathcal{G}=\{G_{\kappa}\}_{\kappa\in \mathbb{N}}$, 
  \item a sequence $b_{1},\ldots,b_{L}$ of $L$ bits, and 
  \item a hash-key $k_{\algh}$ for $\algh$.
  \end{compactitem}
  The attacker $\algo{A}$ then outputs a sequence  $d'_{1},\ldots,d'_{N}$ of documents.  
  Note that the attacker  knows the mapping function $f$ and even the 
  hash-key $k_{H}$ for $\algh$.

  We say that $\algo{A}$ is \emph{successful} if the experiment
  $\Sgen(\algo{A},D,f,b_{1},\ldots,b_{L})$ returns value $1$:

  \myAlgorithm{$\Sgen(\algo{A},D,f,b_{1},\ldots,b_{L})$}{
    Attacker $\algo{A}$, set $D$, function $f$, bits $b_{1},\ldots,b_{L}$}{
    \State $k_{\algp}, k'_{\algp} \gets \combine{\algp}{\Gen}(1^{\kappa})$
    \State $k_{\algh} \gets \combine{\algh}{\Gen}(1^{\kappa})$
    \State $d_{1},\ldots,d_{N} :=
    \algo{generate}(D,f,b_{1},\ldots,b_{L},k_{\algp},k'_{\algp})$
\State $d'_{1},\ldots,d'_{N} \gets \algo{A}(d_{1},\ldots,d_{N},f,
b_{1},\ldots,b_{L},k_{\algh})$
\If{$f(d'_{i})=b_{i}$ for every $i=1,\ldots L$}
\State $D'_{0}=\{d'_{j}\mid f(d'_{j})=0\}$; $D'_{1}=\{d'_{j}\mid
f(d'_{j})=1\}$
\If{
    $d'_{1},\ldots,d'_{N}=\algo{generate}(D'_{0}\cup D'_{1},f,b_{1},\ldots,b_{L},k_{\algp},k'_{\algp})$}
\If{
  $\combine{\algh}{\Eval}_{k_{\algh}}(\lex(D'_{0}\cup D'_{1}))=\combine{\algh}{\Eval}_{k_{\algh}}(\lex(D_{0}\cup
  D_{1}))$}
\If{$d'_{1},\ldots,d'_{N}\neq d_{1},\ldots,d_{N}$}
\State \AlgReturn{$1$} and {\bf halt}
\EndIf
\EndIf
\EndIf
\EndIf
\State\AlgReturn{$0$}
} {Security of $\algo{generate}$}
\end{definition}

We are now ready to give the formal version of Lemma~\ref{lem:produce_negl}:
\begin{lemma*}[formal version of Lemma~\ref{lem:produce_negl}]
  Let $D\subseteq \Sigma^n$ be a set of documents, with 
  $|D|=N$, let  $b_{1},\ldots,b_{L}$ be a bitstring, and $f\in G_{\kappa}$.  
  For every attacker $\algo{A}$ on $\algo{generate}$, there is a
  collision finder $\Fi$ for the \ac{CRHF} $\algh$ 
  such that
  \begin{align*}
    \Pr[\Sgen(\algo{A},D,f,b_{1},\ldots,b_{L})=1]\leq  \Adv_{\Fi,\algh,\chan}^{\hash}(\kappa),
  \end{align*}
  where the probability is taken over the random choices made in
  experiment $\Sgen$.
\end{lemma*}
\begin{proof}
  Let $\algo{A}$ be an attacker on $\algo{generate}$ with maximal success
  probability. Let $D=D_{0}\dot{\cup} D_{1}$ be the input to
  $\algo{generate}$, the sequence $d_{1},\ldots,d_{N}$ its output and $d'_{1},\ldots,d'_{N}$ be the output of
  $\algo{A}$. Furthermore, let $D'_{b}=\{d'_{j}\mid f(d'_{j})=b\}$ and
  $D'=D'_{0}\cup D'_{1}$. We now distinguish three cases of the relation between $D$ and
  $D'$. If $D'\subsetneq D$, the sequence $d'_{1},\ldots,d'_{N}$ must
  contain the same element on at least two positions, but
  $\algo{generate}$ does only accept sets of size exactly $N$. Hence,
  $\algo{A}$ was not successful in this case. If $D'=D$ and $\algo{A}$ was successful,
  it holds that $d'_{1},\ldots,d'_{N}\neq d_{1},\ldots,d_{N}$. Hence,
  there must be positions $i < j$ and $j' < i'$ such that
  $d_{i}=d_{i'}$ and $d_{j}=d_{j'}$. As $k_{\algp}$ and $k'_{\algp}$ define
  a total order, the sequence $d'_{1},\ldots,d'_{N}$
  could not be produced by $\algo{generate}$. Thus, $\algo{A}$ can not be
  successful in this case. 

The only remaining case is $D'\setminus D\neq \emptyset$.  If $\algo{A}$ was
  successful, it holds that
  $\algh_{k_{\algh}}(\lex(D'))=\algh_{k_{\algh}}(\lex(D))$, \ie this is a
  collision with regard to $\algh$. We will now construct a finder
  $\Fi$ for $\algh$, such that
  $\Adv^{\hash}_{\Fi,\algh,\chan}(\kappa)\geq \Pr[\text{$\algo{A}$ succeeds}]$.  The
  finder $\Fi$ receives a hash key $k_{\algh}$. It then chooses
  $f\sgets G_{\kappa}$, samples $D$ documents of cardinality $|D|=N$ via rejection sampling
  and  \ac{PRP}-keys $k_{\algp},k'_{\algp}$. The finder simulates 
  $\algo{A}$ and receives
  \begin{align*}
    d'_{1},\ldots,d'_{N}\gets
  \algo{A}(\algo{generate}(D,f,b_{1},\ldots,b_{L},k_{\algp},k'_{\algp}),f,b_{1},\ldots,b_{L},k_{\algh}).
  \end{align*}

  Then, it returns $D$ and $D'=\{d'_{1},\ldots,d'_{N}\}$. Whenever $\algo{A}$
  succeeds, we have $D\neq D'$ by the discussion above and thus also 
  $\algh_{k_{\algh}}(\lex(D))=\algh_{k_{\algh}}(\lex(D'))$. Hence, $\Fi$ has
  successfully found a collision. This implies that
  $\Adv^{\hash}_{\Fi,\algh,\chan}(\kappa)\geq \Pr[\text{$\algo{A}$ succeeds}]$. 
\qed\end{proof}
\subsection{Proof of Theorem~\ref{thm:rel}}
Recall the statement of the theorem:
\begin{theorem*}[Theorem~\ref{thm:rel}]
  The probability that a message is not correctly embedded by $\combine{\PKSteg^{*}}{\Enc}$ 
  is at most
  $3N^{2}\cdot 2^{-\minent(\chan,\kappa)}+2\exp(-N/54)$.
\end{theorem*}
\begin{proof}
  Note that $\combine{\PKSteg^{*}}{\Enc}$ may not correctly embed a message $m$ if 
  (a) $|D_{0}\cup D_{1}|< N$ \ie a document sampled in line~3 was drawn twice, or 
  (b) $N_{0}/N\not\in [1/3,2/3]$  \ie the bias is too large, or 
  (c) the number of elements of $D_{0}$ or $D_{1}$ is too small to embed
  $b_1,b_2,\ldots,b_{L}$
  by $\algo{generate}$.
  The probability of (a) can be bounded
  similar to the birthday attack. It is roughly bounded by
  $3N^{2}\cdot 2^{-\minent(\chan,\kappa)}$ as the probability of every
  document is bounded by $2^{-\minent(\chan,\kappa)}$.

  A simple calculation shows that the probability of (b) and (c) is
  negligible. Note that the algorithm always correctly embeds a message,
  if $|D_{0}|\geq L$ and $|D_{1}|\geq L$. As
  $N_{0}/N=|D_{0}|/N$, this implies that
  $N_{0}/N\in [1/3,2/3]$. 
  We will thus estimate the
  probability for this. As $f$ is drawn from a strongly $2$-universal
  hash family, we note that the probability that a random document $d$ is
  mapped to $1$ is equal to $1/2$. For $i=1,\ldots,N$, let $X_{i}$ be
  the indicator variable such that $X_{i}$ equals $1$ if the $i$-th element
  drawn from the channel maps to $1$. The random variable
  $X=\sum_{i=1}^{N}X_{i}$ thus has the size of
  $D_{1}$. Clearly, its expected value is $N/2$. The probability that
  $|X-N/2| > L$  (and thus $|D_{1}| < L$ or $|D_{0}| < L$) is hence
  bounded by
  \begin{align*}
    \Pr[|X-N/2| > L]\leq 2\exp(-\frac{L \cdot (1/3)^{2}}{3})=2\exp(-N/54)
  \end{align*}
  using  a Chernoff-like bound.
  The probability that the message $m$ is incorrectly
  embedded is thus bounded by
  $2^{-\minent(\chan,\kappa)}+2\exp(-N/54)$.
\qed\end{proof}

\subsection{Proof of Theorem~\ref{thm:sec}}
We recall:
\begin{theorem*}[Theorem~\ref{thm:sec}]
 Let $\chan$ be a memoryless channel, $\algp$ be a \ac{PRP} 
relative to $\chan$, the algorithm $\algh$ be a \ac{CRHF}
relative to $\chan$, the cryptosystem $\PKES^{\WOR}$ be the  
cryptosystem designed in Section \ref{sec:bias} with very sparse support 
relative to $\chan$, and $\mathcal{G}$ be a strongly $2$-universal hash family. 
  The stegosystem $\PKSteg^{*}$ is
  \ac{SS-CCA}-secure against \emph{every} memoryless channel.
\end{theorem*}

\begin{proof}
    We prove that the above construction is secure via a \emph{hybrid
    argument}. We thus define  six distributions
  $H_{1},\ldots,H_{6}$ shown in Figure~\ref{fig:hybrids}.

  If $P$ and $Q$ are two probability distributions, denote by $\SSCCA-Dist_{P,Q}$
  the modification of the game  $\SSCCA-Dist$, where the call to the stegosystem
  (if $b=0$) is replaced by a  call to $P$ and the call to the channel (if
  $b=1$) is replaced by a call to  $Q$. If $\Ward$ is some warden,
  denote by  $\Adv^{\sscca}_{\Ward,P,Q}(\kappa)$ the winning probability of
  $\Ward$ in $\SSCCA-Dist_{P,Q}$. If $\Adv^{\sscca}_{\Ward,P,Q}(\kappa)\leq
  \negl(\kappa)$ for a negligible function $\negl$, we denote this situation as
  $P\sim Q$ and say that $P$ and $Q$ are \emph{indistinguishable} with respect
  to $\SSCCA-Dist$. Furthermore, we define 
  $\Adv^{(i)}_{\Ward}(\kappa)=\Adv^{\sscca}_{\Ward,H_i,H_{i+1}}(\kappa)$. As
  the term $\Adv^{(i)}_{\Ward}(\kappa)$ can also be written as
  \begin{align*}
\bigl|\Pr[\text{$\combine{\Ward}{\Guess}$ outputs $b'=0$} \mid  b=0]-\Pr[\text{$\combine{\Ward}{\Guess}$ outputs $b'=0$} \mid b=1]\bigr|,    
  \end{align*}
 the triangle inequality implies that 
  $\Adv^{\sscca}_{\Ward,\PKSteg^{*},\chan}(\kappa)\leq  \Adv^{(1)}_{\Ward}(\kappa)+\Adv^{(2)}_{\Ward}(\kappa)+\Adv^{(3)}_{\Ward}(\kappa)+\Adv^{(4)}_{\Ward}(\kappa)+\Adv^{(5)}_{\Ward}(\kappa)$.    
  
  Infor\-mal\-ly, we argue that:
  \begin{enumerate}
  \item  $H_1=H_2\implies H_{1}\sim H_{2}$ because a uniform random permutation on
    a memoryless channel does not change any probabilities;
  \item $H_2=H_3\implies H_{2}\sim H_{3}$  because our choice of $b_{1},\ldots,b_{L}$
    and random  permutations  equal the channel by
    Lemma~\ref{lem:infotheoretic};
\item  $H_{3}\sim H_{4}$ because $\algp$ is a \ac{PRP};
\item  $H_{4}\sim H_{5}$ because $\algp$ is a \ac{PRP}; 
\item $H_{5}\sim H_{6}$
  $\PKES^{\WOR}$ is secure due to Corollary~\ref{cor:indist:cca:perm} and
    because of Lemma~\ref{lem:produce_negl}. 
  \end{enumerate}
Distribution $H_1$ can be specified as follows:
\begin{center}
      \begin{minipage}{.48\linewidth}
    \begin{gameproof}[name=$H$,arg={=\chan^{N}_n}]
      \gameprocedure[linenumbering]{
        \pk^{*}=(\pk,f) \gets \combine{\PKSteg^{*}}{\Gen}(1^{\kappa})\\
      \pcfor j := 1,2,\ldots,N:\\
      \t d_{j} \gets \chan_{\dl(\kappa)}\\
      \pcreturn ((d_{1},\ldots,d_{N}),\pk^{*})
    }
\end{gameproof}
      \end{minipage}
\end{center}

    \begin{description}
  \item[Indistinguishability of $H_1$ and]{$ $}
\begin{center}
      \begin{minipage}{.48\linewidth}
\begin{gameproof}[name=$H$, arg={}, nr=1] 
    \gameprocedure{
      \pk^{*}=(\pk,f) \gets \combine{\PKSteg^{*}}{\Gen}(1^{\kappa})\\
      \text{Lines 1 to 4 in
        $\combine{\PKSteg^{*}}{\Enc}$}\\
      \setcounter{pclinenumber}{4}
      \pcln \gamechange{$P\sgets \Perm$}\\
      \pcln  \pcreturn ((\gamechange{$d_{P(1)},\ldots,d_{P(N)}$}),\pk^{*})
    }
  \end{gameproof}
      \end{minipage}\\
\end{center}
     If $|D_{0}\cup D_{1}|< N$, \ie a
    document was sampled twice or $|D_{0}|/|D| \not\in [1/3,2/3]$, the
    system only outputs the sampled documents. Hence $H_{1}$ equals
    $H_{2}$ in this case. In the other case, we first permute the items
    before we output them. But, as $P$ is a uniform random
    permutation and the documents are drawn  independently from a
    memoryless channel, we have 
    \begin{align*}
    \Pr_{H_{1}}[\text{$d_{1},\ldots,d_{N}$
      are drawn}]=\Pr_{H_{1}}[\text{$d_{P(1)},\ldots,d_{P(N)}$
      are drawn}].      
    \end{align*}
    As $\pk$ is not used in these hybrids, $H_{1}=H_{2}$
    follows. \\
  \item[Indistinguishability of $H_2$ and]{$ $} 
\begin{center}
  \begin{gameproof}[name=$H$, arg={}, nr=2]
        \gameprocedure{
      \pk^{*}=(\pk,f) \gets \combine{\PKSteg^{*}}{\Gen}(1^{\kappa})\\
      \text{Lines 1 to 4 in
        $\combine{\PKSteg^{*}}{\Enc}$}\\
      \setcounter{pclinenumber}{4}
      \pcln P\sgets \Perm; \gamechange{$P'\sgets \Perm$}; \gamechange{$k_{\algh}\gets \combine{\algh}{\Gen}(1^{\kappa})$}\\
      \pcln \gamechange{$b_{1},b_{2},\ldots, b_{L} \gets
        D^{\WOR}_{(N,N_0,L)}$}\\
      \pcln \pcreturn (\gamechange{$
      \algo{generate}(D_{0}\cup D_{1},f,b_{1},\ldots,b_{L},P,P')$},\pk^{*})\\
      \pccomment{$\algo{generate}(\ldots,P,P')$ uses the permutations $P,P'$}
    }
  \end{gameproof}
\end{center}
   If
    $|D_{0}\cup D_{1}|< N$, \ie a document was sampled twice or
    $|D_{0}|/|D| \not\in [1/3,2/3]$, the system only outputs the sampled
    documents. Hence $H_{2}$ equals $H_{3}$ in this case.  If
    $|D_{0}\cup D_{1}|=N$, Lemma~\ref{lem:infotheoretic} shows that
    $H_{2}$ equals $H_{3}$.\\
  \item[Indistinguishability of $H_3$ and] {$ $}
\begin{center}
  \begin{gameproof}[name=$H$, arg={},nr=3]
    \gameprocedure{
      \pk^{*}=(\pk,f) \gets \combine{\PKSteg^{*}}{\Gen}(1^{\kappa})\\
      \text{Lines 1 to 4 in
        $\combine{\PKSteg^{*}}{\Enc}$}\\
      \setcounter{pclinenumber}{4}
      \pcln \gamechange{$k_{\algp} \gets
        \combine{\algp}{\Gen}(1^{\kappa})$}; P'\sgets \Perm; k_{\algh}\gets \combine{\algh}{\Gen}(1^{\kappa})\\
      \pcln b_{1},b_{2},\ldots, b_{L} \sgets D^{\WOR}_{(N,N_0,L)}\\
      \pcln \pcreturn
      (\algo{generate}(D_{0}\cup
      D_{1},f,b_{1},\ldots,b_{L},\gamechange{$k_{\algp}$},P'),\pk^{*})\\
      \pccomment{$\algo{generate}(\ldots,P')$ uses the permutation $P'$}
      }
    \end{gameproof}
\end{center}
    We will construct a distinguisher $\Dist$ on the \ac{PRP} $\algp$ with
    $\Adv^{\prp}_{\Dist,\algp,\chan}(\kappa)=\Adv^{(3)}_{\Ward}(\kappa)$. Note that
    such a distinguisher has access to an oracle  that either
    corresponds to a truly random permutation or to
    $\combine{\algp}{\Eval}_{k}$ for a    
    key~$k\gets \combine{\algp}{\Gen}(1^{\kappa})$. 

    The \ac{PRP}-distinguisher $\Dist$  simulates the run of $\Ward$. It
    first chooses a key-pair $(\pk,\sk)\gets
    \combine{\PKSteg^{*}}{\Gen}(1^{\kappa})$. It
    then simulates $\Ward$. Whenever the warden $\Ward$ makes a
    call to its decoding-oracle $\combine{\PKSteg^{*}}{\Dec}$,
    it computes $\combine{\PKSteg^{*}}{\Dec}(\sk,\cdot)$ (or $\bot$ if
    necessary). In order to generate the challenge sequence $\hat{d}$ upon
    the message $m$, it simulates the run of $\combine{\PKSteg^{*}}{\Enc}$ and replaces every
    call to $P$ or $\combine{\algp}{\Eval}_{k_{\algp}}$ by a call to its oracle.
    Similarly, the bits output by $\combine{\PKES^{\WOR}\!}{\Enc}(\pk,m)$ are ignored and
    replaced by truly random bits distributed according to
    $D^{\WOR}_{(N,|D_{0}|,L)}$. If the oracle
    is a truly random permutation, the
    simulation yields exactly $H_{3}$ and if the oracle equals $\combine{\algp}{\Eval}_{k}$ for a
    certain key $k$, the simulation yields $H_{4}$. The advantage of $\Dist$
    is thus exactly $\Adv^{(3)}_{\Ward}(\kappa)$. As $\algp$ is a secure
    \ac{PRP}, this advantage is negligible and $H_{3}$ and $H_{4}$
    are thus indistinguishable.\\
  \item[Indistinguishability of $H_4$ and] {$ $}
\begin{center}
     \begin{gameproof}[name=$H$, arg={},nr=4]
    \gameprocedure{
      \pk^{*}=(\pk,f) \gets \combine{\PKSteg^{*}}{\Gen}(1^{\kappa})\\
      \text{Lines 1 to 4 in
        $\combine{\PKSteg^{*}}{\Enc}$}\\
      \setcounter{pclinenumber}{4}
      \pcln k_{\algp} \gets
        \combine{\algp}{\Gen}(1^{\kappa}); \gamechange{$k'_{\algp} \gets
        \combine{\algp}{\Gen}(1^{\kappa})$}; k_{\algh}\gets \combine{\algh}{\Gen}(1^{\kappa})\\
      \pcln b_{1},b_{2},\ldots, b_{L} \sgets D^{\WOR}_{(N,N_0,L)}\\
      \pcln \pcreturn (\algo{generate}(D_{0}\cup
      D_{1},f,b_{1},\ldots,b_{L},k_{\algp},\gamechange{$k'_{\algp}$}),\pk^{*})
      }
    \end{gameproof}
\end{center} 
    
    We will construct a distinguisher $\Dist$ on the \ac{PRP} $\algp$ with
    $\Adv^{\prp}_{\Dist,\algp,\chan}(\kappa)=\Adv^{(4)}_{\Ward}(\kappa)$. Note that
    such a distinguisher has access to an oracle  that either
    corresponds to a truly random permutation or to
    $\combine{\algp}{\Eval}_{k}$ for a    
    key~$k\gets \combine{\algp}{\Gen}(1^{\kappa})$. 

    The \ac{PRP}-distinguisher $\Dist$  simulates the run of $\Ward$. It
    first chooses a key-pair $(\pk,\sk)\gets
    \combine{\PKSteg^{*}}{\Gen}(1^{\kappa})$ and a key $k_{\algp}\gets
    \combine{\algp}{\Gen}(1^{\kappa})$ for the \ac{PRP} $\algp$.  It
    then simulates $\Ward$. Whenever the warden $\Ward$ makes a
    call to its decoding-oracle $\combine{\PKSteg^{*}}{\Dec}$,
    it computes $\combine{\PKSteg^{*}}{\Dec}(\sk,\cdot)$ (or $\bot$ if
    necessary). In order to generate the challenge sequence $\hat{d}$ upon
    the message $m$, it simulates the run of $\combine{\PKSteg^{*}}{\Enc}$ and replaces every
    call to $P'$ or $\combine{\algp}{\Eval}_{k_{\algp}}$ by a call to its oracle.
    Similarly, the bits output by $\combine{\PKES^{\WOR}\!}{\Enc}(\pk,m)$ are ignored and
    replaced by truly random bits distributed according to
    $D^{\WOR}_{(N,|D_{0}|,L)}$. If the oracle is a truly random permutation, the
    simulation yields exactly $H_{4}$ and if the oracle equals $\combine{\algp}{\Eval}_{k}$ for a
    certain key $k$, the simulation yields $H_{5}$. The advantage of $\Dist$
    is thus exactly $\Adv^{(4)}_{\Ward}(\kappa)$. As $\algp$ is a secure
    \ac{PRP}, this advantage is negligible and $H_{4}$ and $H_{5}$
    are thus indistinguishable.\\
  \item[Indistinguishability of $H_5$ and] {$ $}
\begin{center}
      \begin{gameproof}[name=$H$, arg={=\combine{\PKSteg^{*}}{\Enc}},nr=5]
        \gameprocedure{
          \pk^{*}=(\pk,f) \gets \combine{\PKSteg^{*}}{\Gen}(1^{\kappa})\\
      \text{Lines 1 to 4 in
        $\combine{\PKSteg^{*}}{\Enc}$}\\
      \setcounter{pclinenumber}{4}
      \pcln k_{\algp} \gets \combine{\algp}{\Gen}(1^{\kappa}); k'_{\algp} \gets
        \combine{\algp}{\Gen}(1^{\kappa}); k_{\algh}\gets
        \combine{\algh}{\Gen}(1^{\kappa})\\
        \pcln \gamechange{$h := \combine{\algh}{\Eval}_{k_{\algh}}(\lex(D_0 \cup
          D_1))$}\\
        \pcln \gamechange{$\combine{\PKES^{\WOR}\!\!}{\setup}(N,N_0)$}\\
      \pcln b_{1},b_{2},\ldots, b_{L} \gets \gamechange{$\combine{\PKES^{\WOR}}{\Enc}(\pk,m\mid\mid k_{\algh}\mid\mid
  k_{\algp}\mid\mid k'_{\algp} \mid\mid h)$}\\
      \pcln \pcreturn (\algo{generate}(D_{0}\cup
      D_{1},f,b_{1},\ldots,b_{L},k_{\algp},k'_{\algp}),\pk^{*})
      }
  \end{gameproof}
\end{center}
    We construct an
    attacker $\Att$ on $\PKES^{\WOR}$ such that there is a negligible function
    $\negl$ with
    $\Adv^{\cca}_{\Att,\PKES^{\WOR},\chan}(\kappa)+\negl(\kappa) \geq
    \Adv^{(5)}_{\Ward}(\kappa)$.
    Note that such an attacker $\Att$ has access to the
    decryption-oracle $\combine{\PKES^{\WOR}\!}{\Dec}_{\sk}(\cdot)$.
    
    The attacker $\Att$ simply simulates $\Ward$. First, it chooses
    $f\sgets G_{\kappa}$. Whenever $\Ward$ uses its decryption-oracle to decrypt
    $d_{1},\ldots,d_{N}$, the attacker $\Att$ simulates
    $\combine{\PKSteg^{*}}{\Dec}(d_{1},\ldots,d_{N})$ and
    uses 
    its own decryption-oracle $\combine{\PKES^{\WOR}\!}{\Dec}_{\sk}(\cdot)$ in this. When $\Ward$ outputs the
    challenge $m$, the attacker $\Att$ chooses all of the parameters 
    $D_{0},D_{1},k_{\algh},k_{\algp},k'_{\algp}$ as in $\combine{\PKSteg^{*}}{\Enc}$ and chooses its own challenge
    $\widetilde{m} := m \mid\mid k_{\algh}\mid\mid k_{\algp}\mid \mid
    k'_{\algp}\mid\mid 
    h$, where $h=\combine{\algh}{\Eval}_{k_{\algh}}(D_{0}\cup D_{1})$.

    The attacker now either receives $\vec{b}\gets \combine{\PKES^{\WOR}\!}{\Enc}(\pk,\widetilde{m})$ or
    $L$ random bits $\vec{b}$  from $D^{\WOR}_{(N,|D_{0}|,L)}$ and computes
    \begin{align*}
d_{1},\ldots,d_{N}=\algo{generate}(D_{0} \cup D_{1},f,b_{1},\ldots,b_{L},k_{\algp},k'_{\algp}).      
    \end{align*}
    If the bits correspond to
    $\combine{\PKES^{\WOR}\!}{\Enc}(\pk,\widetilde{m})$, this simulates the
    stegosystem and thus $H_{6}$ perfectly. If the bits are random, this
    equals~$H_{5}$.

    After the challenge is determined, $\Att$ continues to simulate
    $\Ward$. Whenever $\Ward$ uses its decryption-oracle to decrypt
    $d_{1},\ldots,d_{N}$, it behaves as above. There is now a
    significant difference to the pre-challenge situation: The attacker
    $\Att$ is not allowed to decrypt the bits
    $\vec{b}=b_{1},\ldots,b_{L}$. Hence, when $\Ward$ tries to
    decrypt documents $d_{1},\ldots,d_{N}$ such that
    $f(d_{i})=b_{i}$, it has no way to use its
    decryption-oracle and must simply return $\bot$. Suppose that this
    situation arises. Note that the decryption-oracle of $\Ward$ would only
    return a message not equal to $\bot$ then iff
    $d_{1},\ldots,d_{N}=\algo{generate}(D_{0} \cup D_{1},f,\vec{b},k_{\algp},k'_{\algp})$
    and $\combine{\algh}{\Eval}_{k_{\algh}}(\{d_{1},\ldots,d_{N}\})=h$. 

    If $\vec{b}$ is a truly random string from $D^{\WOR}_{(N,|D_{0}|,L)}$, the
    sparsity of $\PKES^{\WOR}$ implies that
    the probability that $\vec{b}$ is a valid encoding is negligible. Hence
    the probability that the decryption-oracle of $\Ward$ would return a
    message not equal to $ \bot$ is negligible. It only remains to prove
    that the probability that the decryption-oracle of $\Ward$ returns a
    message not equal to $\bot$ is negligible if $\vec{b}$ is a valid
    encryption of a message. But Lemma~\ref{lem:produce_negl} states
    just that. We thus have
    $\Adv^{\cca}_{\Att,\PKES^{\WOR},\chan}(\kappa)+\negl(\kappa) \geq
    \Adv^{(5)}_{\Ward}(\kappa)$.
    As the system $\PKES^{\WOR}$ is \ac{CCA}-secure by Corollary~\ref{cor:indist:cca:perm}, this advantage is negligible.  Hence,
    $H_{5}$ and $H_{6}$ are indistinguishable.
   \end{description}
  Hence, the stegosystem $\PKSteg^{\WOR}$ is \ac{SS-CCA}-secure on $\chan$.
\qed\end{proof}

\end{document}